\newcommand{\ignore}[1]{}
\newtheoremstyle{postnum}
  {\topsep}
  {\topsep}
  {\slshape}
  {0pt}
  {\bfseries}
  {:}
  { }
  {\thmname{#1}\thmnote{ (#3)}}
\theoremstyle{postnum}
\newtheoremstyle{prenum}
  {\topsep}
  {\topsep}
  {\slshape}
  {0pt}
  {\bfseries}
  {.}
  { }
  {\thmnumber{#2}\thmname{ #1}\thmnote{ (#3)}}
\theoremstyle{prenum}
\newtheorem{theorem}{Theorem}[section]
\newtheorem{lemma}[theorem]{Lemma}
\newtheorem{proposition}[theorem]{Proposition}
\newtheorem{corollary}[theorem]{Corollary}
\newtheorem{example}[theorem]{Example}
\newcommand{\etal}{{\em et al.~}}
\newcommand{\ZZ}{\mathbb{Z}}
\newcommand{\RR}{\mathbb{R}}
\newcommand{\QQ}{\mathbb{Q}}
\newcommand{\CC}{\mathbb{C}}
\newcommand{\ii}{\mathtt{i}}
\DeclareMathOperator{\spn}{span}
\DeclareMathOperator{\spec}{Sp}
\newcommand{\eket}[1]{\mathbf{e}_{#1}}
\newcommand{\ebra}[1]{\mathbf{e}_{#1}^{T}}
\newcommand{\ket}[1]{| #1 \rangle}
\newcommand{\HH}{\mathcal{H}}
\newcommand{\zo}{\{0,1\}}
\newcommand{\Q}{\mathds{Q}}
\newcommand{\ee}{\mathbf{e}}
\newcommand{\vv}{\mathbf{v}}
\newcommand{\0}{\textbf{0}}
\newcommand{\zero}{\texttt{O}}
\newcommand{\pmat}[1]{\begin{pmatrix}#1\end{pmatrix}}
\newcommand{\ov}{\overline}
\newcommand\gustone[4][]{%
    \begin{scope}
        \fill[stone,#2-stone] (#3,#4) circle (0.45);
        \clip (#3,#4) circle (0.45);
        \shade[#2-highlight] (-0.15+#3,0.5+#4) circle (0.7);
    \end{scope}
    \node[#2-number] at (#3,#4) {};
}
\newif\ifnotesw\noteswtrue
\ifnotesw\marginpar[\hfill\(\top\)]{\(\top\)}\fi}%
\ifnotesw\marginpar[\hfill\(\bot\)]{\(\bot\)}\fi}
\newcommand{\mnote}[1]%
    {\ifnotesw\marginpar%
        [{\scriptsize\begin{minipage}[t]{\marginparwidth}
        \raggedleft#1%
                        \end{minipage}}]%
        {\scriptsize\begin{minipage}[t]{\marginparwidth}
        \raggedright#1%
                        \end{minipage}}%
    \fi}
\title{Quantum Fractional Revival on Graphs}
\author{
Ada Chan \\ \scriptsize{Department of Mathematics and Statistics,} \\ \scriptsize{York University.}
\and
Gabriel Coutinho\thanks{Corresponding author: \texttt{gabriel@dcc.ufmg.br}}\\ \scriptsize{Departamento de Ci\^{e}ncia da Computa\c{c}\~{a}o,} \\ \scriptsize{Universidade Federal de Minas Gerais.}
\and
Christino Tamon \\ \scriptsize{Department of Computer Science,}\\ \scriptsize{Clarkson University.}
\and
Luc Vinet \\ \scriptsize{Centre de Recherches Math\'{e}matiques,} \\ \scriptsize{Universit\'e de Montr\'eal.}
\and
Hanmeng Zhan \\ \scriptsize{Department of Combinatorics and Optimization,} \\ \scriptsize{University of Waterloo.}
}
\date{\today}
\begin{document}
\maketitle
\begin{abstract}
Fractional revival is a quantum transport phenomenon important for entanglement generation in spin networks. 
This takes place whenever a continuous-time quantum walk maps 
the characteristic vector of a vertex to a superposition of the characteristic vectors of 
a subset of vertices containing the initial vertex. 
A main focus will be on the case when the subset has two vertices.
We explore necessary and sufficient spectral conditions for graphs to exhibit fractional revival.
This provides a characterization of fractional revival in paths and cycles.
Our work builds upon the algebraic machinery developed for related quantum transport phenomena 
such as state transfer and mixing, and it reveals a fundamental connection between them.
\medskip
\par\noindent{\em Keywords}: Quantum walk, graph spectra, fractional revival, 
state transfer, entanglement. \\
\par\noindent{\em MSC}: 05E30, 05C50. 33C05, 15A16, 81P40.
\end{abstract}

\newpage


\section{Introduction}

Continuous-time quantum walk is a fundamental technique in quantum information and computation.
It was used by Farhi and Gutmann \cite{fg98} to study quantum algorithmic problems on graphs.
This has led to the design of a continuous-time quantum walk algorithm which exhibits an exponential
speedup over any classical algorithm (see Childs \etal \cite{ccdfgs03}).
The connection between quantum walk in the continuous-time and discrete-time models 
was described by Childs \cite{c10}.

In quantum information, continuous-time quantum walk has proved important for studying transport 
problems in quantum spin networks. 
This was initiated by Bose \cite{b03} who studied state transfer in quantum spin networks. 
Further work along these lines were described by Christandl \etal \cite{cdel04,acde04,cddekl05}
and a comprehensive survey was given by Kay \cite{k11}.

A quantum spin network on a graph with $n$ vertices is obtained by attaching a qubit to 
each vertex of the graph. Here, a quantum state is a normalized vector in $(\CC^{2})^{\otimes n}$. 
Specifically, if $\ket{0},\ket{1}$ denote the basis states of our qubit, 
then a quantum state of the network is a normalized vector in
$\spn\{\ket{x} : x \in \zo^{n}\}$, where $\ket{x} = \bigotimes_{i=1}^{n} \ket{x_{i}}$.
Suppose $X=(V,E)$ is a weighted graph on $n$ vertices represented by a Hermitian matrix $A = (a_{u,v})$.
Consider a Hamiltonian $H$ of the form\footnote{Here,
$\sigma_{X}^{(u)} = \bigotimes_{v} M_{v}$, where $M_{v} = I$ if $v \neq u$, and $M_{u} = \sigma_{X}$. 
A similar convention applies to $\sigma_{Y},\sigma_{Z}$.}
\begin{equation}
H = \sum_{(u,v) \in E} a_{u,v} (\sigma_{X}^{(u)}\sigma_{X}^{(v)} + \sigma_{Y}^{(u)}\sigma_{Y}^{(v)})
		+ \sum_{u \in V} a_{u,u}\sigma_{Z}^{(u)}.
\end{equation}
Since $H$ commutes with $H_{0} = \sum_{u \in V} \sigma_{Z}^{(u)}$, 
the Schr\"{o}dinger evolution
\begin{equation} \label{eqn:schrodinger}
\ket{\Psi(t)} = e^{-\ii tH}\ket{\Psi(0)}
\end{equation}
preserves the number of qubits in the excited state.

In a quantum transport involving two vertices $a,b \in V$, 
we consider a state with an arbitrary qubit $\ket{\psi}$ at $a$ 
and qubit $\ket{0}$ at the other vertices.
If we label the vertices with integers in such a way that
$a$ is the leftmost one and $b$ is the rightmost one, then
the initial state is given by
\begin{equation}
\ket{\Psi(0)} = \ket{\psi}_{a} \ket{0}^{\otimes (n-2)} \ket{0}_{b}. 
\end{equation} 
For {\em perfect state transfer} from $a$ to $b$, 
we ask if there a time $\tau$ for which the qubit $\ket{\psi}$ has been transported to $b$; that is,
up to a global phase, we have
\begin{equation} \label{eqn:pst_question}
\ket{\Psi(\tau)} 
	= \ket{0}_{a} \ket{0}^{\otimes (n-2)} \ket{\psi}_{b}
\end{equation}
If $\ket{\psi} = \mu\ket{0} + \nu\ket{1}$, this is equivalent to requiring
\begin{equation}
e^{-\ii \tau H} \left(\mu\ket{0}^{\otimes n} 
		+ \nu\ket{1}_{a} \ket{0}^{\otimes (n-2)} \ket{0}_{b}\right) 
	= \mu\ket{0}^{\otimes n} 
		+ \nu \ket{0}_{a} \ket{0}^{\otimes (n-2)} \ket{1}_{b} 
\end{equation}
or, if $\ket{a},\ket{b}$ denote 
$\ket{1}_{a}\ket{0}^{\otimes (n-2)}\ket{0}_{b}$,
$\ket{0}_{a}\ket{0}^{\otimes (n-2)}\ket{1}_{b}$, respectively, 
then
\begin{equation} \label{eqn:one-excitation}
e^{-\ii\tau H}\ket{a} = \gamma\ket{b}.
\end{equation}
In quantum information, it is known that entanglement is a useful and important resource.
A relevant quantum transport phenomenon for entanglement generation is {fractional revival}.
We say {\em fractional revival} occurs at $a$ and $b$ at time $\tau$ if
\begin{equation} \label{eqn:fr_in_spin}
e^{-\ii\tau H}\ket{a} = \alpha\ket{a} + \beta\ket{b}
\end{equation}
for complex scalars $\alpha$ and $\beta \neq 0$ with $|\alpha|^{2} + |\beta|^{2} = 1$.
Tracing out the other qubits, the resulting quantum state in \eqref{eqn:fr_in_spin} is equivalent 
to the entangled state 
\begin{equation}
\alpha\ket{1}_{a}\ket{0}_{b} + \beta\ket{0}_{a}\ket{1}_{b}.
\end{equation}
Note that, in the spin network picture, we are not requiring $e^{-\ii tH}$ maps
$\ket{\psi}_{a}\ket{0}_{b}$ to $\ket{\psi}_{a}\ket{\psi}_{b}$ as this violates the
No Cloning theorem.

Note that the evolution in \eqref{eqn:one-excitation} stays in the single-excitation subspace 
$\spn\{\eket{u} : u \in V\}$, where $\eket{u}$ is the characteristic vector of vertex $u$.
So, if $\ket{\Psi(0)}$ belongs to this subspace, the unitary evolution is simply
\begin{equation} \label{eqn:unitary-walk}
U(t) = e^{-\ii tA}
\end{equation}
which is called a continuous-time quantum walk on $X$ (see \cite{fg98}).
We adopt this framework throughout our work.

Fractional revival had been studied in quantum spin networks 
(see 
Chen \etal \cite{css07},
Banchi \etal \cite{bcb15},
Genest \etal \cite{gvz16}, 
and
Christandl \etal \cite{cvz17}). 
The graphs they studied may be viewed as weighted paths 
and their analysis reveal the power of the weighting schemes via orthogonal polynomials 
(for example, Krawtchouk polynomials).
In this work, we focus on quantum fractional revival in graphs (whose adjacency matrices are $\zo$-valued).
The goal is to understand the role of the underlying graph structure on fractional revival 
while focusing less on the effect of the weighting scheme. 
Most graphs we consider are unweighted although in some cases we draw comparisons 
with their weighted cousins.
Our work reveals a fundamental connection and useful interplay between fractional revival 
and other quantum transport phenomena.

\begin{figure}[t]
\begin{center}
\begin{tikzpicture}[
    scale=0.75,
    stone/.style={}, 
    black-stone/.style={black!80},
    black-highlight/.style={outer color=black!80, inner color=black!30},
    black-number/.style={white},
    white-stone/.style={white!70!black},
    white-highlight/.style={outer color=white!70!black, inner color=white},
    white-number/.style={black},
    decoration=snake]

\gustone[1]{black}{-3}{+1.5}
\gustone[2]{black}{0}{+1.5}
\gustone[3]{black}{+3}{+1.5}

\draw[line width = 2pt, red, ->] (-3,2) -- ++(0,0.75);
\draw[line width = 2pt, blue, <-] (0,2) -- ++(0,0.75);
\draw[line width = 2pt, blue, <-] (3,2) -- ++(0,0.75);

\path (-2.75, 1.5) node (x1) [scale=0.9] {};
\path (-0.25, 1.5) node (x2) [scale=0.9] {};
\path (+0.25, 1.5) node (y1) [scale=0.9] {};
\path (+2.75, 1.5) node (y2) [scale=0.9] {};
\draw[decorate, very thick]
    (x1) -- (x2);
\draw[decorate, very thick]
    (y1) -- (y2);

\end{tikzpicture}
\quad \quad \quad
\begin{tikzpicture}[
    scale=0.75,
    stone/.style={}, 
    black-stone/.style={black!80},
    black-highlight/.style={outer color=black!80, inner color=black!30},
    black-number/.style={white},
    white-stone/.style={white!70!black},
    white-highlight/.style={outer color=white!70!black, inner color=white},
    white-number/.style={black},
    decoration=snake]

\gustone[1]{black}{-3}{-1.5}
\gustone[2]{black}{0}{-1.5}
\gustone[3]{black}{+3}{-1.5}

\draw[line width = 2pt, blue, <-] (-2.8,-1) -- ++(0,0.75); 
\draw[line width = 2pt, red, ->] (-3.2,-1) -- ++(0,0.75); 
\draw[line width = 2pt, blue, <-] (-0.2,-1) -- ++(0,0.75); 
\draw[line width = 2pt, blue, <-] (+0.2,-1) -- ++(0,0.75); 
\draw[line width = 2pt, blue, <-] (2.8,-1) -- ++(0,0.75);
\draw[line width = 2pt, red, ->] (3.2,-1) -- ++(0,0.75);

\path (-2.75, -1.5) node (x1) [scale=0.9] {};
\path (-0.25, -1.5) node (x2) [scale=0.9] {};
\path (+0.25, -1.5) node (y1) [scale=0.9] {};
\path (+2.75, -1.5) node (y2) [scale=0.9] {};
\draw[decorate, very thick]
    (x1) -- (x2);
\draw[decorate, very thick]
    (y1) -- (y2);

\end{tikzpicture}
\caption{Fractional revival on the path $P_{3}$ of three vertices (sites).
Up to normalization and tracing out the middle qubit,
the initial state (left) is $\ket{1}\ket{0}$ 
and
the final state (right) is $\ket{1}\ket{0} + \ket{0}\ket{1}$.
}
\end{center}
\end{figure}
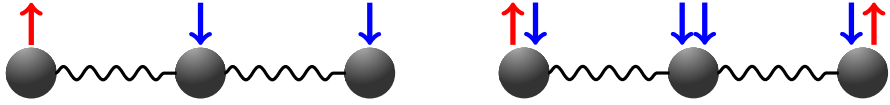

An outline of this paper is as follows.
In Section \ref{section:preliminaries}, we briefly provide a relevant background of graph theory 
and formally define the basic notions of state transfer in graphs.
In Section \ref{section:construction}, we describe generic constructions of graphs with
fractional revival from graphs with other quantum transport properties.
One of our constructions answers an open question posed by Genest \etal \cite{gvz16}
while another one provides a combinatorial variant of a spectral based method
described in the same work.

In Section \ref{section:properties}, we observe basic properties of fractional revival in graphs.
First, we show weak symmetry: if fractional revival exists from $a$ to $b$, 
then it also exists from $b$ to $a$ although not necessarily preserving the original superposition.
Second, we show that if two vertices are involved in a fractional revival, then 
the projections of their characteristic vectors onto any eigenspace of the graph are 
parallel as vectors (here, the vertices are called {\em parallel} \cite{g12}).
In Section \ref{section:cospectral}, we exploit this spectral property for characterizing 
fractional revival between cospectral vectices in a graph. 

Two vertices $a$ and $b$ in a graph $X$ are called {\em cospectral} if
the vertex deleted subgraphs $X \setminus a$ and $X \setminus b$ share the same spectra.
We show that if fractional revival occurs between cospectral vertices,
then perfect or pretty good (almost) state transfer occurs between them 
or both vertices are periodic (perfect state transfer to themselves).
This shows that, in the presence of cospectrality, fractional revival is a sufficient condition 
for these relevant quantum transport phenomena.
Moreover, we observe that, when fractional revival occurs between cospectral vertices,
the relevant eigenvalues are either all integers or all algebraic integers from a quadratic 
field. In the latter case, the specific form for the eigenvalues which belong to a quadratic field
has an extra degree of freedom when compared to the perfect state transfer case
(as proved by Godsil \cite{g12b}).

In Section \ref{section:equitable}, we observe that fractional revival in a graph is preserved 
under taking a graph quotient modulo an equitable partition. Moreover, any graph automorphism 
which fixes a vertex involved in a fractional revival must also fix the other vertex.
Building upon these observations, in the last two sections, we characterize fractional revival 
in two infinite families of graphs.
In Section \ref{section:bipartite}, we show that fractional revival occurs on a cycle
if and only if the cycle has order four or six (see Figure \ref{fig:cycles}).
In Section \ref{section:path}, we prove that fractional revival occurs on a path
if and only if the path has order two, three or four (see Figure \ref{fig:paths}).
It is curious to note that these nearly match the known orders of cycles and paths 
with perfect state transfer (see \cite{cddekl05,g12}).


\section{Preliminaries} \label{section:preliminaries}

\par\noindent{\bf Notation}. 
The identity and all-one matrices of order $n$ are denoted $I_{n}$ and $J_{n}$, respectively;
but we will omit the subscript $n$ whenever it is clear from context.
For a vector $\mathbf{v}$, we denote its $i$th entry as $\mathbf{v}_{i}$.
We use $\ee_{i}$ to denote the unit vector that is $1$ at position $i$ and is $0$ elsewhere.

We briefly review some relevant background from graph theory (see Godsil and Royle \cite{gr01}).
For a graph $X$, we denote its set of vertices as $V(X)$ and its set of edges 
(which are pairs of vertices) as $E(X)$.
The adjacency matrix $A(X)$ of $X$ is a matrix of order $|V(X)|$ defined as 
$A(X)_{a,b} = 1$, if $(a,b) \in E(X)$, and $0$ otherwise.
The spectrum $\spec(X)$ of $X$ is the set of eigenvalues of $A(X)$.
We say $X$ is undirected if $A(X)$ is symmetric and $X$ is simple if $A(X)$ has zero diagonal.
We call $X$ bipartite if its vertex set admits a bipartition $V_{1} \cup V_{2}$ where each
edge intersects both $V_{1}$ and $V_{2}$.

For two graphs $X$ and $Y$, the Cartesian product $X \Box Y$ is a graph over $V(X) \times V(Y)$ 
whose edges consist of pairs $(x_{1},y_{1})$ and $(x_{2},y_{2})$ which
are adjacent if either $(x_{1},x_{2}) \in E(X)$ with $y_{1}=y_{2}$ 
or $(y_{1},y_{2}) \in E(Y)$ with $x_{1}=x_{2}$.
A relevant fact is that $A(X \Box Y) = A(X) \otimes I + I \otimes A(Y)$.
The union $X \cup Y$ of $X$ and $Y$ is a graph with vertex set $V(X) \cup V(Y)$
whose edge set is $E(X) \cup E(Y)$. The complement $\overline{X}$ of graph $X$ is a graph
with the same vertex set but whose edge set includes all pairs of distinct vertices which 
are not adjacent in $X$.
The join $X + Y$ of $X$ and $Y$ is the graph whose complement is $\overline{X} \cup \overline{Y}$.

We define some standard families of graphs relevant to our work.
The path $P_{n}$ has vertex set $\{1,\ldots,n\}$ where $(i,j) \in E(P_{n})$ if $|i-j| = 1$.
The cycle $C_{n}$ has vertex set $\ZZ/n\ZZ$ with $(i,j) \in E(C_{n})$ if $i-j \equiv \pm 1\pmod{n}$.
The clique (or complete graph) $K_{n}$ has edges between each pair of distint vertices,
which implies $A(K_{n}) = J_{n}-I_{n}$.
The star $K_{1,n}$ is the graph $K_{1} + \overline{K_{n}}$.
The double cone of $X$ is given by $\overline{K_{2}} + X$.
The $d$-dimensional cube $Q_{d}$ is defined recursively as $Q_{d} = K_{2} \Box Q_{d-1}$, if $d > 1$,
and $Q_{1} = K_{2}$.

Next, we formally define pertinent notions related to quantum walk and state transfer.
For a graph $X$, the {\em continuous-time quantum walk} (or transition) matrix of $X$ is given by
\begin{equation}
U_{X}(t) = \exp(-\ii t A(X))
\end{equation}
where $A(X)$ is the adjacency matrix of $X$. Whenever it is clear from context,
we drop the subscript and simply use $U(t)$.
Let $a$ and $b$ be two distinct vertices of $X$. 
We say that $X$ admits {\em fractional revival from $a$ to $b$} at time $\tau \in \RR \setminus \{0\}$
if for some $\alpha, \beta \in \CC$, with $|\alpha|^2 + |\beta|^2 = 1$ 
{\color{black} and $\beta\neq 0$}, 
we have \begin{align} \label{eqn:fr_def}
U(\tau) \ee_a = \alpha \ee_a + \beta \ee_b. 
\end{align}
In this case, we also say that {\em $(\alpha,\beta)$-revival occurs from $a$ to $b$ at time $\tau$}.
The fractional revival is called {\em balanced} if $|\alpha| = |\beta|$.
We say $X$ has $(\alpha,\beta)$-revival if it has $(\alpha,\beta)$-revival from each of its vertices 
at the same time. This holds if there is a permutation matrix $T$ 
(with no fixed points) where for some time $\tau$ we have
\begin{equation}
U(\tau) = \alpha I + \beta T.
\end{equation}

We define several other quantum transport properties.
The graph $X$ is called {\em periodic at vertex $a$} at time $\tau$ 
if $\beta = 0$ in \eqref{eqn:fr_def}.   
{\color{black} In this case, any vertex $b$ satisfies \eqref{eqn:fr_def} and we have chosen not to
consider this as a special case of fractional revival.}
We say $X$ has {\em perfect state transfer} from $a$ to $b$ at time $\tau$
if $\alpha = 0$ in \eqref{eqn:fr_def}.
Finally, we say $X$ has {\em (instantaneous) uniform mixing} at time $\tau$ if
$U(\tau)$ is a flat matrix whose entries all have the same magnitude.
(See Godsil \cite{g12} for a survey of these notions.)


\section{Generic Constructions} \label{section:construction}

In this section, we describe methods for constructing graphs with fractional revival from 
graphs with perfect state transfer, vertex periodicity, and/or uniform mixing. 

Our first theorem shows how to construct generalized fractional revival from graphs with 
periodicity and instantaneous uniform mixing.
To this end, we consider a generalized notion of fractional revival.
Given a vertex $a$ and a subset of vertices $B \subset V(X)$ where $a \in B$, 
we say $X$ has {\em generalized fractional revival} from $a$ to $B$ at time $\tau$ if we have
$\ebra{b}U(\tau)\eket{a} \neq 0$ if and only if $b \in B$.

\begin{theorem} \label{thm:ium_to_fr}
Let $X$ be a graph that is periodic at vertex $a$ at time $\tau$.
Let $Y$ be a graph with instantaneous uniform mixing at time $\tau$.
Then, for any vertex $u$ of $Y$, the graph $X \Box Y$ has generalized fractional revival 
from $(a,u)$ to the vertices $\{(a,v) : v \in V(Y)\}$ at time $\tau$.

\begin{proof}
Since $U_{X \Box Y}(t) = U_{X}(t) \otimes U_{Y}(t)$, we have
\begin{equation}
U_{X \Box Y}(\tau)
	= \gamma \pmat{1 & \zero^T \\ \zero & B} \otimes W,
\end{equation}
where $\gamma$ is a complex unimodular number, $B$ is a symmetric matrix and $W$ is a unitary flat matrix.
From this we see that 
\begin{equation}
U_{X \Box Y}(\tau) \ee_{(a,u)} = \frac{\gamma}{\sqrt{|V(Y)|}} \sum_{v \in V(Y)} \alpha_{v}\ee_{(a,v)}
\end{equation}
where $\alpha_{v}$ is complex unimodular for each $v \in V(Y)$.
Hence, there is fractional revival from vertex $(a,u)$ to the set of vertices $\{(a,v) : v \in V(Y)\}$ 
at time $\tau$.
\end{proof}
\end{theorem}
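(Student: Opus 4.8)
The plan is to exploit multiplicativity of the continuous-time quantum walk matrix under the Cartesian product. Recall from the preliminaries that $A(X \Box Y) = A(X) \otimes I + I \otimes A(Y)$, and since the two summands commute the matrix exponential factors as $U_{X \Box Y}(t) = e^{-\ii t A(X)} \otimes e^{-\ii t A(Y)} = U_X(t) \otimes U_Y(t)$. This reduces the problem to understanding each factor separately at the single time $\tau$.

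Next I would unpack the two hypotheses. Periodicity of $X$ at $a$ at time $\tau$ means $|(U_X(\tau))_{a,a}| = 1$; since $U_X(\tau)$ is unitary its column indexed by $a$ is a unit vector, so an entry of modulus one forces every other entry of that column to vanish, giving $U_X(\tau)\ee_a = \gamma\,\ee_a$ for some unimodular $\gamma \in \CC$. Instantaneous uniform mixing of $Y$ at time $\tau$ means $U_Y(\tau)$ is a flat unitary matrix $W$, every entry of which has modulus $1/\sqrt{|V(Y)|}$, in particular nonzero.

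Combining these, for any vertex $u$ of $Y$ I would compute $U_{X \Box Y}(\tau)\,\ee_{(a,u)} = \bigl(U_X(\tau)\ee_a\bigr) \otimes \bigl(U_Y(\tau)\ee_u\bigr) = \gamma\,\ee_a \otimes (W \ee_u) = \gamma \sum_{v \in V(Y)} W_{v,u}\,\ee_{(a,v)}$. Because every coefficient $W_{v,u}$ is nonzero, the amplitude $\ebra{(b,v)}U_{X \Box Y}(\tau)\ee_{(a,u)}$ is nonzero exactly when $b = a$, for arbitrary $v$, which is precisely the assertion that $X \Box Y$ has generalized fractional revival from $(a,u)$ to $\{(a,v) : v \in V(Y)\}$ at time $\tau$.

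I do not expect a genuine obstacle: once the product formula for $U_{X \Box Y}$ is in hand this is a two-line computation. The only points that deserve a moment of care are the deduction that periodicity gives an honest unimodular multiple of $\ee_a$ (not merely a large diagonal entry), and checking both directions of the equivalence in the definition of generalized fractional revival — nonvanishing on the target set, coming from flatness of $W$, and vanishing off it, coming from periodicity of $X$.
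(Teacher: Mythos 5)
Your proposal is correct and follows essentially the same route as the paper: factor $U_{X\Box Y}(\tau) = U_X(\tau)\otimes U_Y(\tau)$, use periodicity to write $U_X(\tau)\ee_a = \gamma\,\ee_a$ with $\gamma$ unimodular, and use flatness of $U_Y(\tau)$ to conclude that every amplitude on $\{(a,v): v\in V(Y)\}$ is nonzero while all others vanish. Your extra remarks (deriving $U_X(\tau)\ee_a=\gamma\ee_a$ from unitarity and checking both directions of the definition) are just explicit versions of steps the paper leaves implicit.
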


\begin{corollary} \label{cor:bunkbed}
Suppose $X$ is periodic at vertex $a$ with period $\tau < \pi/2$. 
Then the Cartesian product $X\square K_2$ has fractional revival from vertex $(a,u)$ 
to vertices $(a,u)$ and $(a,v)$ at time $\tau$, where $u$ and $v$ are the vertices of $K_2$.
\end{corollary}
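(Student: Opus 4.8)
The plan is to exploit the multiplicativity of the quantum walk matrix across a Cartesian product, together with the fully explicit form of the walk on $K_2$. First I would recall that $A(X \square K_2) = A(X) \otimes I_2 + I \otimes A(K_2)$, that the two summands commute, and hence $U_{X \square K_2}(t) = U_X(t) \otimes U_{K_2}(t)$ for every $t$. Next I would diagonalise $A(K_2)$, whose eigenvalues are $\pm 1$, to obtain $U_{K_2}(t) = \cos(t) I_2 - \ii \sin(t) A(K_2)$; in particular $U_{K_2}(\tau)\ee_u = \cos(\tau)\ee_u - \ii\sin(\tau)\ee_v$.

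The key step is then to feed in the periodicity of $X$ at $a$, which by definition says $U_X(\tau)\ee_a = \gamma\ee_a$ for some unimodular $\gamma \in \CC$. Tensoring the two one-vertex actions gives
\[
U_{X \square K_2}(\tau)\,\ee_{(a,u)} = \bigl(U_X(\tau)\ee_a\bigr) \otimes \bigl(U_{K_2}(\tau)\ee_u\bigr) = \gamma\cos(\tau)\,\ee_{(a,u)} - \ii\gamma\sin(\tau)\,\ee_{(a,v)},
\]
so that $(\alpha,\beta)$-revival occurs from $(a,u)$ to $(a,v)$ at time $\tau$ with $\alpha = \gamma\cos\tau$ and $\beta = -\ii\gamma\sin\tau$. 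To finish I would verify the two conditions in the definition of fractional revival: $|\alpha|^2 + |\beta|^2 = \cos^2\tau + \sin^2\tau = 1$ is automatic, and $\beta \neq 0$ since $0 < \tau < \pi/2$ forces $\sin\tau \neq 0$ (the hypothesis $\tau < \pi/2$, rather than merely $\tau \notin \pi\ZZ$, moreover guarantees $\cos\tau \neq 0$, so the phenomenon is genuine fractional revival and not perfect state transfer).

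There is no genuine obstacle here; the computation is short. The one point I would emphasise is that this corollary is \emph{not} a special case of Theorem \ref{thm:ium_to_fr}: the graph $K_2$ has instantaneous uniform mixing only at the times $\tau$ with $|\cos\tau| = |\sin\tau|$, whereas Corollary \ref{cor:bunkbed} covers every period below $\pi/2$. The underlying reason it still works is that at the special time $\tau$ the periodicity of $X$ prevents the amplitude sitting at $a$ from leaking into the rest of $X$, so the evolution on the two-vertex fibre above $a$ is indistinguishable from the quantum walk on $K_2$ alone, which is $(\cos\tau, -\ii\sin\tau)$-revival between its two vertices.
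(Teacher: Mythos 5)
Your proof is correct and follows essentially the same route the paper intends: factor $U_{X\square K_2}(\tau)=U_X(\tau)\otimes U_{K_2}(\tau)$, use periodicity at $a$ in the first factor and the explicit evolution $U_{K_2}(\tau)=\cos(\tau)I-\ii\sin(\tau)A(K_2)$ in the second, and read off $(\gamma\cos\tau,\,-\ii\gamma\sin\tau)$-revival, with $0<\tau<\pi/2$ guaranteeing both amplitudes are nonzero. Your side remark is also apt: since $K_2$ mixes uniformly only at times with $|\cos\tau|=|\sin\tau|$, the corollary is not a literal instance of Theorem \ref{thm:ium_to_fr}, but it follows from the same tensor-factorization argument once the flat-matrix hypothesis is replaced by the explicit $K_2$ walk, which is exactly the computation you carried out.
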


In Theorem \ref{thm:ium_to_fr}, if $|V(Y)| > 2$, then the generalized fractional revival occurs 
among a set of more than two vertices. This answers a question posed by Genest \etal \cite{gvz16}.
We provide an explicit family of examples in the following.

\begin{example} (Fractional revival from periodicity and uniform mixing) \\
The star $K_{1,n}$ is periodic at time $t = \pi/\sqrt{n}$ and
the $d$-cube $Q_{d}$ has instantaneous uniform mixing at time $t = \pi/4$ 
(see Moore and Russell \cite{mr02}).
Then, the graph $K_{1,16} \Box Q_{d}$ has generalized fractional revival among 
$2^{d}$ vertices at time $\pi/4$, for all $d \ge 1$.
When the cube is $K_{2}$ or $d=1$, we have balanced fractional revival at time $\pi/4$.
\end{example}

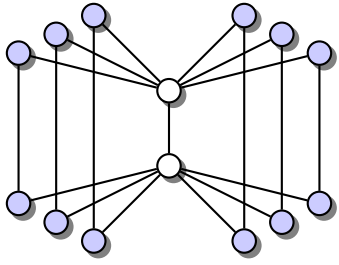
\begin{figure}[t]
\begin{center}
\begin{tikzpicture}[
    scale=0.5,
    stone/.style={},
    black-stone/.style={black!80},
    black-highlight/.style={outer color=black!80, inner color=black!30},
    black-number/.style={white},
    white-stone/.style={white!70!black},
    white-highlight/.style={outer color=white!70!black, inner color=white},
    white-number/.style={black}]


\tikzstyle{every node}=[draw, thick, shape=circle, circular drop shadow, fill={white}];
\path (0,1) node (a) [scale=0.8] {};
\path (0,-1) node (b) [scale=0.8] {};
\tikzstyle{every node}=[draw, thick, shape=circle, circular drop shadow, fill={blue!20}];
\path (-4,+2) node (a1) [scale=0.8] {};
\path (-3,+2.5) node (a2) [scale=0.8] {};
\path (-2,+3) node (a3) [scale=0.8] {};
\path (+2,+3) node (a4) [scale=0.8] {};
\path (+3,+2.5) node (a5) [scale=0.8] {};
\path (+4,+2) node (a6) [scale=0.8] {};
\draw[thick]
    (a) -- (a1)
    (a) -- (a2)
    (a) -- (a3)
    (a) -- (a4)
    (a) -- (a5)
    (a) -- (a6);
\path (-4,-2) node (b1) [scale=0.8] {};
\path (-3,-2.5) node (b2) [scale=0.8] {};
\path (-2,-3) node (b3) [scale=0.8] {};
\path (+2,-3) node (b4) [scale=0.8] {};
\path (+3,-2.5) node (b5) [scale=0.8] {};
\path (+4,-2) node (b6) [scale=0.8] {};
\draw[thick]
    (b) -- (b1)
    (b) -- (b2)
    (b) -- (b3)
    (b) -- (b4)
    (b) -- (b5)
    (b) -- (b6);
\draw[thick]
    (a) -- (b)
    (a1) -- (b1)
    (a2) -- (b2)
    (a3) -- (b3)
    (a4) -- (b4)
    (a5) -- (b5)
    (a6) -- (b6);
\end{tikzpicture}
\caption{The graph $K_{1,6} \Box K_{2}$ has fractional revival between vertices marked white.}
\label{fig:claw_bunkbed}
\end{center}
\end{figure}

\ignore{
\begin{example} (Fractional revival from periodicity) \\
Recall that the star $K_{1,n}$ is periodic at time $t = \pi/\sqrt{n}$.
Then, the irregular graph $K_{1,n} \Box K_{2}$ has fractional revival at time $\pi/\sqrt{n}$.
The fractional revival is balanced for $n=16$.
\end{example}
}

Our second theorem shows how to construct fractional revival from graphs with 
perfect state transfer by taking a union with an auxiliary graph induced by the
perfect state transfer.

\begin{theorem}
Let $X$ be a graph with perfect state transfer between vertices $a$ and $b$ at time $\tau$,
where $\tau < \pi/2$. 
Let $Y$ be a graph on the same vertex set as $X$ where $(a,b)$ is an isolated edge.
If the adjacency matrices of $X$ and $Y$ commute, 
then $X \cup Y$ has fractional revival at time $\tau$.

\begin{proof}
The adjacency matrix of $X \cup Y$ is given by $A(X \cup Y) = A(X) + A(Y)$.
Since $A(X)$ and $A(Y)$ commute, we have
\begin{equation}
U_{X \cup Y}(\tau) = U_{Y}(\tau)U_{X}(\tau).
\end{equation}
Therefore, for some complex unimodular $\gamma$, we have
\begin{eqnarray}
U_{X \cup Y}(\tau)\ee_{a} 
	& = & U_{Y}(\tau)U_{X}(\tau)\ee_{a} \\
	& = & \gamma U_{Y}(\tau)\ee_{b} \\
	& = & \gamma (\cos(\tau)\ee_{b} - \ii\sin(\tau)\ee_{a}).
\end{eqnarray}
Since $\tau < \pi/2$, this shows fractional revival occurs between $a$ and $b$.
\end{proof}
\end{theorem}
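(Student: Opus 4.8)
The plan is to exploit the two hypotheses in complementary ways: the commutativity of $A(X)$ and $A(Y)$ lets me split the quantum walk of $X \cup Y$ into a product of the walks on $X$ and on $Y$, while the isolated-edge condition on $Y$ makes the $Y$-walk act in a completely transparent manner on the pair $\{a,b\}$. Composing the perfect state transfer supplied by $X$ with this simple $Y$-evolution should produce exactly a two-term superposition supported on $a$ and $b$, which is what fractional revival requires.

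First I would record that, since the edge sets of $X$ and $Y$ are disjoint, $A(X \cup Y) = A(X) + A(Y)$. As $A(X)$ and $A(Y)$ commute, the matrix exponential factors, giving $U_{X \cup Y}(\tau) = U_X(\tau)\, U_Y(\tau) = U_Y(\tau)\, U_X(\tau)$. Applying this to $\ee_a$ and invoking perfect state transfer from $a$ to $b$ at time $\tau$, there is a unimodular $\gamma \in \CC$ with $U_X(\tau)\ee_a = \gamma \ee_b$, so that $U_{X \cup Y}(\tau)\ee_a = \gamma\, U_Y(\tau)\ee_b$. This reduces the whole problem to understanding the single vector $U_Y(\tau)\ee_b$.

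The key structural step is to determine $U_Y(\tau)\ee_b$ using the isolated-edge hypothesis. Because $(a,b)$ is an isolated edge of $Y$, the matrix $A(Y)$ decomposes as a direct sum of the $2 \times 2$ block $\pmat{0 & 1 \\ 1 & 0}$ supported on $\{a,b\}$ and an arbitrary matrix on the remaining vertices, with no entries linking the two parts. Consequently $U_Y(\tau)$ leaves the subspace $\spn\{\ee_a, \ee_b\}$ invariant and acts on it as the standard single-edge propagator $\cos(\tau)\,I - \ii\sin(\tau)\,S$, where $S$ swaps $a$ and $b$. Hence $U_Y(\tau)\ee_b = \cos(\tau)\,\ee_b - \ii\sin(\tau)\,\ee_a$, which crucially has support only on $a$ and $b$.

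Combining the two steps yields $U_{X \cup Y}(\tau)\ee_a = -\ii\gamma\sin(\tau)\,\ee_a + \gamma\cos(\tau)\,\ee_b$, so I would set $\alpha = -\ii\gamma\sin(\tau)$ and $\beta = \gamma\cos(\tau)$. The normalization $|\alpha|^2 + |\beta|^2 = \sin^2(\tau) + \cos^2(\tau) = 1$ holds automatically, and for $0 < \tau < \pi/2$ we have $\cos(\tau) > 0$, so $\beta \neq 0$; by the definition this is genuine fractional revival from $a$ to $b$. I expect the main obstacle to be the justification of the isolated-edge reduction: one must argue carefully that $U_Y(\tau)$ does not leak amplitude out of $\spn\{\ee_a, \ee_b\}$, which is precisely where the hypothesis that $(a,b)$ is \emph{isolated} (rather than merely an edge) is indispensable. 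The bound $\tau < \pi/2$ then serves only to keep $\cos(\tau)$ strictly positive, ensuring $\beta$ does not vanish.
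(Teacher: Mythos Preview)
Your proposal is correct and follows essentially the same argument as the paper: factor $U_{X \cup Y}(\tau) = U_Y(\tau)U_X(\tau)$ via commutativity, apply perfect state transfer to get $\gamma\,U_Y(\tau)\ee_b$, and then compute $U_Y(\tau)\ee_b = \cos(\tau)\ee_b - \ii\sin(\tau)\ee_a$ using the isolated-edge structure. Your write-up is somewhat more explicit than the paper's in justifying why the isolated edge confines the $Y$-evolution to $\spn\{\ee_a,\ee_b\}$ and in verifying $\beta = \gamma\cos(\tau) \neq 0$, but the route is identical.
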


\begin{example} \label{eg:pst_to_fr}
(Fractional revival from perfect state transfer) \\
Let $Y$ be a graph with perfect state tranfer at time $\pi/2^{k}$ for $k \ge 2$.
For example, we may choose one of the graphs from the Hamming scheme $\HH(n,2)$
constructed by Chan \cite{chan}. 
Let $X = A_{n}$ be the Hamming graph that is a perfect matching containing 
the edges $\{(a,\ov{a}) : a \in \zo^{n}\}$, where 
$\ov{a}$ denote $a$ with all bits flipped.
Then, $X \cup Y$ has fractional revival at $\pi/2^{k}$ between the antipodal pair of vertices.
Note fractional revival is balanced if $k=2$.
\end{example}

The next theorem shows another method to construct a graph with fractional revival from
a graph with perfect state transfer. It is related to the {\em isospectral} transformation 
on spin networks described 
by Genest \etal \cite{gvz16}, Dai \etal \cite{dfk09}, and Kay \cite{k11}, 
but in the context of graphs.

\begin{theorem}
Suppose $Y$ has perfect state transfer between vertices $a$ and $b$ at time $\pi/2$.
Assume there is an automorphism $T$ of $Y$ with order two which swaps $a$ and $b$.
Consider the graph $X_{\theta}$ whose adjacency matrix is
\begin{equation}
A(X_{\theta}) = 
	I \otimes Y 
	+ \cos(2\theta) (\sigma_{X} \otimes I)
	+ \sin(2\theta) (\sigma_{Z} \otimes T). 
\end{equation}
(Note: $X_{0} = K_{2} \Box Y$)
Then, $X_{\theta}$ has $e^{-\ii\pi/2}(\sin(2\theta),\cos(2\theta))$-revival between $(0,a)$ and $(1,b)$ 
at time $\pi/2$.

\begin{proof}
Note that
\begin{equation}
e^{-\ii tA(X_{\theta})}
	= \exp(-\ii t(\cos(2\theta) (\sigma_{X} \otimes I) + \sin(2\theta) (\sigma_{Z} \otimes T)))
		(I \otimes e^{-\ii tA(Y)}).
\end{equation}
Since $\sigma_{X} \otimes I$ and $\sigma_{Z} \otimes T$ are anti-commuting, we have
\begin{eqnarray}
\lefteqn{\exp(-\ii t (\cos(2\theta) (\sigma_{X} \otimes I) + \sin(2\theta) (\sigma_{Z} \otimes T)))} \\
	& = &
	\cos(t) I - \ii\sin(t)(\cos(2\theta) (\sigma_{X} \otimes I) + \sin(2\theta)(\sigma_{Z} \otimes T)).
\end{eqnarray}
Therefore,
\begin{equation}
e^{-\ii tA(X_{\theta})}\ket{0,a} = 
	-\ii\sin(t)\sin(2\theta)\ket{0,a}
	+
	\cos(t)\ket{0,b}
	- 
	\ii\sin(t)\cos(2\theta)\ket{1,b}.
\end{equation}
When $t=\pi/2$, this yields
\begin{equation}
e^{-\ii (\pi/2)A(X_{\theta})}\ket{0,a} = 
	-\ii\sin(2\theta)\ket{0,a} - \ii\cos(2\theta)\ket{1,b}.
\end{equation}
Thus, $X_{\theta}$ has $e^{-\ii\pi/2}(\sin(2\theta),\cos(2\theta))$-revival between $(0,a)$ and $(1,b)$
at time $\pi/2$.
\end{proof}
\end{theorem}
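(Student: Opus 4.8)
The plan is to diagonalize the dynamics by exploiting the tensor structure. Write $M := \cos(2\theta)(\sigma_{X}\otimes I) + \sin(2\theta)(\sigma_{Z}\otimes T)$, so that $A(X_{\theta}) = (I\otimes A(Y)) + M$. The first step is to observe that the two summands commute: $\sigma_{X}\otimes I$ commutes with $I\otimes A(Y)$ trivially, and $\sigma_{Z}\otimes T$ commutes with $I\otimes A(Y)$ precisely because $T$, being an automorphism of $Y$, satisfies $TA(Y)=A(Y)T$. Hence the transition matrix factors as
\[
e^{-\ii tA(X_{\theta})} = e^{-\ii tM}\bigl(I\otimes e^{-\ii tA(Y)}\bigr).
\]

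The second step is to put $e^{-\ii tM}$ in closed form. Since $\sigma_{X}$ and $\sigma_{Z}$ anticommute and $I$ commutes with $T$, the two summands of $M$ anticommute; and since $\sigma_{X}^{2}=\sigma_{Z}^{2}=I$ and $T^{2}=I$ ($T$ is an involution), each summand squares to the identity. The cross terms therefore cancel, giving $M^{2}=\cos^{2}(2\theta)I+\sin^{2}(2\theta)I=I$, and hence $e^{-\ii tM}=\cos(t)I-\ii\sin(t)M$. At $t=\pi/2$ this collapses to $e^{-\ii(\pi/2)M}=-\ii M$.

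The third step is to track the vector $\ket{0,a}$. Applying $I\otimes e^{-\ii(\pi/2)A(Y)}$ first and using that $Y$ has perfect state transfer from $a$ to $b$ at time $\pi/2$, we obtain $\ket{0}\otimes e^{-\ii(\pi/2)A(Y)}\ket{a}=\gamma\ket{0,b}$ for a unimodular scalar $\gamma$. Applying $-\ii M$ and using $\sigma_{X}\ket{0}=\ket{1}$, $\sigma_{Z}\ket{0}=\ket{0}$, and $T\ket{b}=\ket{a}$ (as $T$ swaps $a$ and $b$) yields
\[
e^{-\ii(\pi/2)A(X_{\theta})}\ket{0,a} = -\ii\gamma\sin(2\theta)\ket{0,a} - \ii\gamma\cos(2\theta)\ket{1,b}.
\]
The amplitude is supported exactly on $\{(0,a),(1,b)\}$, the squared magnitudes sum to $\sin^{2}(2\theta)+\cos^{2}(2\theta)=1$, and the off-diagonal coefficient is nonzero whenever $\cos(2\theta)\neq 0$ (the case $\cos(2\theta)=0$ degenerates to perfect state transfer and is excluded). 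This is exactly the claimed $(\sin 2\theta,\cos 2\theta)$-revival between $(0,a)$ and $(1,b)$, the prefactor $e^{-\ii\pi/2}$ recording the $-\ii$ from $e^{-\ii(\pi/2)M}$.

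I do not expect a genuine obstacle: once the factorization in the first step is in place, the argument is Pauli-matrix arithmetic together with the defining property of perfect state transfer. The only point to watch is the unimodular phase $\gamma$ inherited from the perfect state transfer on $Y$; it multiplies both amplitudes, and being unimodular it leaves the revival structure and the identity $|\alpha|^{2}+|\beta|^{2}=1$ untouched, so it can be read off or folded into the global phase of the statement.
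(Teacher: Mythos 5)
Your proof is correct and follows essentially the same route as the paper's: factor the evolution using commutativity with $I\otimes A(Y)$, use the anticommutation of $\sigma_X\otimes I$ and $\sigma_Z\otimes T$ to get $e^{-\ii tM}=\cos(t)I-\ii\sin(t)M$, and evaluate at $t=\pi/2$; your explicit justification that $TA(Y)=A(Y)T$ and $T^2=I$, and your tracking of the unimodular phase $\gamma$ from the perfect state transfer on $Y$, merely fill in details the paper leaves implicit. One tiny slip worth noting: the degenerate case $\cos(2\theta)=0$ gives periodicity at $(0,a)$ (i.e.\ $\beta=0$), not perfect state transfer, but it is rightly excluded either way.
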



\section{Basic Properties} \label{section:properties}

In this section, we describe several basic properties of graphs which have fractional revival.
First, we show a weak symmetry property for fractional revival.

\begin{proposition}\label{prop:sym}
If $(\alpha, \beta)$-revival occurs from $a$ to $b$ in a graph $X$ then
$(-\frac{\ov{\alpha}\beta}{\ov{\beta}}, \beta)$ revival occurs from $b$ to $a$ at the same time.
\begin{proof}
Suppose $U(\tau)\ee_a=\alpha \ee_a + \beta \ee_b$.   Since $U(\tau)$ is symmetric and unitary, we get
$U(\tau)_{a,b} = \beta$ and
\begin{equation}
( U(\tau)\ee_a)^{*} U(\tau) \ee_b = 0.
\end{equation}
The left-hand side can be written as
\begin{equation}
(\alpha \ee_a+\beta \ee_b)^{*} U(\tau)\ee_b = \ov{\alpha}\beta + \ov{\beta}  U(\tau)_{b,b},
\end{equation}
which gives 
\begin{equation}
U(\tau)_{b,b} = -\frac{\ov{\alpha}\beta}{\ov{\beta}}.
\end{equation}
We conclude from $|U(\tau)_{a,b}|^2+|U(\tau)_{b,b}|^2 = |\beta|^2+|\alpha|^2 =1$ that 
$(-\frac{\ov{\alpha}\beta}{\ov{\beta}}, \beta)$ revival occurs from $b$ to $a$ at time $\tau$.
\end{proof}
\end{proposition}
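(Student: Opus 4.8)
The plan is to use only two structural facts about the transition matrix $U(\tau)$: it is complex symmetric and it is unitary. Write $\gamma := U(\tau)_{b,b}$ for the (a priori unknown) diagonal entry at $b$. The first step is to observe that the $a$-th column of $U(\tau)$ is exactly $U(\tau)\ee_a = \alpha\ee_a + \beta\ee_b$, so by symmetry the $a$-th row has the same support; in particular $U(\tau)_{b,a} = U(\tau)_{a,b} = \beta$. Consequently the $b$-th column of $U(\tau)$ has $a$-entry equal to $\beta$, $b$-entry equal to $\gamma$, and some as-yet-unidentified remaining entries.

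Next I would invoke the fact that distinct columns of a unitary matrix are orthogonal, i.e. $(U(\tau)\ee_a)^{*}\,U(\tau)\ee_b = 0$. Since $U(\tau)\ee_a = \alpha\ee_a + \beta\ee_b$, expanding the left-hand side picks out only the $a$- and $b$-coordinates of $U(\tau)\ee_b$, which yields $\ov{\alpha}\beta + \ov{\beta}\gamma = 0$, and hence $\gamma = -\ov{\alpha}\beta/\ov{\beta}$; this is the point at which $\beta \neq 0$ is used. In particular $|\gamma| = |\alpha|$.

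Finally I would eliminate the other entries of the $b$-th column by a norm count: $U(\tau)\ee_b$ is a unit vector whose $a$- and $b$-coordinates already contribute $|\beta|^2 + |\gamma|^2 = |\beta|^2 + |\alpha|^2 = 1$, so every remaining coordinate must vanish, giving $U(\tau)\ee_b = \beta\ee_a + \gamma\ee_b$ with $|\gamma|^2 + |\beta|^2 = 1$ and $\beta \neq 0$ — which is precisely $(-\ov{\alpha}\beta/\ov{\beta},\beta)$-revival from $b$ to $a$ at time $\tau$. There is no real obstacle here; the only things demanding a little care are the conjugate-linearity convention in the inner product (so that $\alpha$, and not $\gamma$, is the factor that gets conjugated) and keeping track of the hypothesis $\beta \neq 0$, which both licenses the division above and certifies that the conclusion is genuine fractional revival rather than mere periodicity.
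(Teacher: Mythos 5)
Your proposal is correct and follows essentially the same route as the paper's proof: symmetry gives $U(\tau)_{a,b}=\beta$, orthogonality of the $a$-th and $b$-th columns gives $U(\tau)_{b,b}=-\ov{\alpha}\beta/\ov{\beta}$, and the norm count $|\beta|^2+|\alpha|^2=1$ forces the remaining entries of the $b$-th column to vanish. The only difference is that you spell out the final norm-count step, which the paper leaves implicit.
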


The above proposition allows us to say fractional revival {\em between vertices} $a$ {\em and} $b$, 
in place of fractional revival {\em from} $a$ {\em to} $b$.

Next, we describe a necessary spectral property of graphs with fractional revival.
Let $X$ be a graph with adjacency matrix $A(X)$ whose spectral decomposition is
\begin{equation}
A(X) = \sum_{r = 0}^d \theta_r E_r
\end{equation}
where $\theta_{r}$ are the eigenvalues and $E_{r}$ are the orthogonal projectors onto the
eigenspace corresponding to $\theta_{r}$.
Throughout this article, unless otherwise stated, 
we assume that $X$ is connected and $\theta_0$ is the largest eigenvalue of $A(X)$.

We say that two vertices $a$ and $b$ of $X$ are {\em parallel} if
$E_r \ee_a$ is parallel to $E_r \ee_b$ as vectors, for any $r$. 
The two vertices are called {\em cospectral} if $(E_{r})_{a,a} = (E_{r})_{b,b}$ for all $r$.
Finally, we say $a$ and $b$ are {\em strongly cospectral} if they are parallel and cospectral,
or that $E_{r}\ee_{a} = \pm E_{r}\ee_{b}$, for each $r$.
The {\em eigenvalue support} of $a$ is the set $\Phi_a = \{\theta_r : E_r\ee_a\neq 0\}$.
Strongly cospectral vertices have the same eigenvalue support.

\begin{proposition}
If there is $(\alpha,\beta)$-revival between $a$ and $b$ in a graph $X$, then
these vertices are parallel.
\begin{proof}
The equation $U(\tau)\ee_a=\alpha \ee_a + \beta \ee_b$ implies
\begin{equation}
e^{-\ii \tau \theta_r} E_r \ee_a = \alpha E_r \ee_a + \beta E_r \ee_b, \qquad \text{for all $r$.}
\end{equation}
This shows $a$ and $b$ are parallel since $\beta \neq 0$.
\end{proof}
\end{proposition}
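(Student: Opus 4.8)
The plan is to start from the defining relation of $(\alpha,\beta)$-revival, namely $U(\tau)\ee_a = \alpha\ee_a + \beta\ee_b$, and push it through the spectral decomposition of $A(X)$. Since $U(\tau) = \exp(-\ii\tau A(X)) = \sum_r e^{-\ii\tau\theta_r} E_r$ and the $E_r$ are mutually orthogonal idempotents summing to the identity, I would apply $E_r$ to both sides of the revival equation. Using $E_r U(\tau) = e^{-\ii\tau\theta_r} E_r$ and $E_r E_s = \delta_{rs} E_r$, this immediately yields, for each $r$,
\begin{equation}
e^{-\ii\tau\theta_r}\, E_r\ee_a = \alpha\, E_r\ee_a + \beta\, E_r\ee_b.
\end{equation}

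Next I would solve this for $E_r\ee_b$: rearranging gives $\beta\, E_r\ee_b = (e^{-\ii\tau\theta_r} - \alpha)\, E_r\ee_a$, and since $\beta \neq 0$ by the definition of fractional revival, we may divide to obtain $E_r\ee_b = \beta^{-1}(e^{-\ii\tau\theta_r} - \alpha)\, E_r\ee_a$. This exhibits $E_r\ee_b$ as a scalar multiple of $E_r\ee_a$ for every $r$, which is exactly the statement that $a$ and $b$ are parallel (the scalar is allowed to be zero, which handles the case $E_r\ee_a = 0$, and also is consistent with $\theta_r \notin \Phi_a$; one can note in passing that this forces $\Phi_a = \Phi_b$ unless the scalar vanishes).

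There is essentially no obstacle here — the argument is a one-line manipulation once the spectral decomposition is invoked. The only point requiring any care is the appeal to $\beta \neq 0$, which is built into the definition of fractional revival (as emphasized in the excerpt), so without it the conclusion could fail when $E_r\ee_a$ and $E_r\ee_b$ are genuinely independent. I would present the proof in the two displayed lines above and conclude by citing $\beta\neq 0$.
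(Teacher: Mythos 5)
Your proposal is correct and follows essentially the same route as the paper: apply the spectral idempotents $E_r$ to the revival equation to get $e^{-\ii\tau\theta_r}E_r\ee_a = \alpha E_r\ee_a + \beta E_r\ee_b$ and invoke $\beta\neq 0$. Your extra step of explicitly solving for $E_r\ee_b$ is just a slightly more spelled-out version of the paper's one-line conclusion.
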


Note that the results in this and the next sections apply to any  symmetric matrix $M$ with non-negative entries, in particular, to the adjacency matrix of a weighted undirected graph.

\begin{example}(Weighted Path on three vertices)\\
For $\omega \in \RR$,  let $P_3(\omega)$ denote the weighted path below.
\begin{center}
\begin{tikzpicture}[
    scale=0.5,
    stone/.style={},
    black-stone/.style={black!80},
    black-highlight/.style={outer color=black!80, inner color=black!30},
    black-number/.style={white},
    white-stone/.style={white!70!black},
    white-highlight/.style={outer color=white!70!black, inner color=white},
    white-number/.style={black}]

\path (-2,+0.5) node (w1) [scale=0.8] {$\omega$};
\path (+2,+0.5) node (w2) [scale=0.8] {$1$};
\path (-4,-1.0) node (w1) [scale=0.9] {$a$};
\path (+4,-1.0) node (w1) [scale=0.9] {$b$};

\tikzstyle{every node}=[draw, thick, shape=circle, circular drop shadow, fill={blue!20}];
\path (-4,0) node (a) [scale=0.8] {};
\path (0,0) node (b) [scale=0.8] {};
\path (+4,0) node (c) [scale=0.8] {};
\draw[thick]
    (a) -- (b) -- (c);
\end{tikzpicture}
\end{center}
\ignore{
\begin{tikzpicture}[scale=1]
\path (0,0) coordinate (1);
\fill (1) circle (2pt);
\draw (0,0) node[anchor=north]{{\small $a$}};
\path (1,0) coordinate (2);
\fill (2) circle (2pt);
\path (2,0) coordinate (3);
\fill (3) circle (2pt);
\draw (2,0) node[anchor=north]{{\small $b$}};

\draw (1) -- (2);
\draw (2) -- (3);

\draw (0.5,0) node[anchor=south]{{\small $\omega$}};
\draw (1.5,0) node[anchor=south]{{\small $1$}};
\end{tikzpicture}
}
Then its adjacency matrix has three simple eigenvalues $0$ and $\pm\sqrt{\omega^2+1}$ with
\begin{eqnarray}
E_0 & = & \frac{1}{1+\omega^2} \begin{pmatrix} 1&0&-\omega\\0&0&0\\-\omega&0&\omega^2\end{pmatrix} \\
E_{\pm \sqrt{\omega^2+1}} & = & \frac{1}{2(\omega^2+1)}
\begin{pmatrix}
\omega^2 & \pm \omega\sqrt{\omega^2+1} & \omega\\
\pm \omega\sqrt{\omega^2+1} & \omega^2+1 & \pm\sqrt{\omega^2+1}\\
\omega& \pm\sqrt{\omega^2+1} &1
\end{pmatrix}.
 \end{eqnarray}
The transition matrix of the quantum walk at time $\tau=\frac{\pi}{\sqrt{\omega^2+1}}$ is
\begin{equation}
U(\tau) = 
\frac{1}{\omega^2+1} \begin{pmatrix} 1-\omega^2 & 0 & -2\omega\\0&-1-\omega^2&0\\-2\omega&0&\omega^2-1\end{pmatrix}.
\end{equation}
When $\omega=1$, the vertices $a$ and $b$ are strongly cospectral and there is  perfect state transfer between them at time $\frac{\pi}{\sqrt{2}}$.
For $\omega\neq 1$,  the vertices $a$ and $b$ are parallel but not cospectral.  The weighted path $P_3(\omega)$ has $(\frac{1-\omega^2}{1+\omega^2}, \frac{-2\omega}{1+\omega^2})$-revival between the
antipodal vertices.  
Note that $P_3(\sqrt{2}-1)$ has balanced fractional revival between $a$ and $b$.
\end{example}


\section{Cospectral Vertices} \label{section:cospectral}

We can say more when fractional revival occurs between cospectral vertices.
\begin{proposition}\label{prop:strongly_cospectral}
Suppose $(\alpha, \beta)$-revival occurs from $a$ to $b$ in a graph $X$ at time $\tau$.
Then $a$ and $b$ are (strongly) cospectral if and only if  there exist $\gamma, \zeta \in \RR$ such that
\begin{equation}
\alpha = e^{\ii\zeta}\cos \gamma 
\quad \text{and} \quad 
\beta = \ii e^{\ii \zeta}\sin \gamma.
\end{equation}
\begin{proof}
If $a$ and $b$ are cospectral then $U(\tau)_{a,a} = U(\tau)_{b,b}$.   From Proposition~\ref{prop:sym},  $\alpha \ov{\beta} \in \ii\RR$.
It follows from $|\alpha|^2+|\beta|^2=1$ that 
$\alpha = e^{\ii\zeta}\cos \gamma$ and 
$\beta =\ii e^{\ii \zeta}\sin \gamma$ for some $\gamma, \zeta \in \RR$.

Conversely, we have 
$U(\tau)\ee_a = e^{\ii \zeta}(\cos \gamma \ee_a + \ii \sin \gamma \ee_b)$ and 
$U(\tau)\ee_b = e^{\ii \zeta}(\cos \gamma \ee_b + \ii \sin \gamma \ee_a)$
which are equivalent to
\begin{align}
e^{-\ii \tau \theta_r} (E_r \ee_a - E_r \ee_b)  & =  e^{\ii \zeta}(\cos \gamma  - \ii \sin \gamma ) (E_r \ee_a - E_r \ee_b) \qquad \text{and} \\
e^{-\ii \tau \theta_r} (E_r \ee_a + E_r \ee_b)  & =  e^{\ii \zeta}(\cos \gamma  + \ii \sin \gamma ) (E_r \ee_a + E_r \ee_b)  
\end{align} 
for all $\theta_r\in \Phi_a$.
If there exists $r$ such that $E_r\ee_a \neq \pm E_r\ee_b$ then $\sin \gamma =0$.    It follows from $\beta\neq 0$ that $a$ and $b$ are strongly cospectral.
\end{proof}
\end{proposition}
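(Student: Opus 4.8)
The plan is to exploit the spectral reformulation of fractional revival together with the two notions of cospectrality. Recall that $U(\tau)\ee_a = \alpha\ee_a + \beta\ee_b$ is equivalent to the system $e^{-\ii\tau\theta_r}E_r\ee_a = \alpha E_r\ee_a + \beta E_r\ee_b$ over all $r$, and since $a,b$ are already known to be parallel (the previous proposition), we can write $E_r\ee_b = \lambda_r E_r\ee_a$ for scalars $\lambda_r$ whenever $\theta_r\in\Phi_a$. Plugging this in yields $e^{-\ii\tau\theta_r} = \alpha + \beta\lambda_r$, a scalar identity for each $r\in\Phi_a$; moreover $\lambda_r = \pm 1$ exactly when $a,b$ are strongly cospectral in the $\theta_r$-eigenspace.

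For the forward direction, suppose $a$ and $b$ are cospectral. Then $U(\tau)_{a,a} = U(\tau)_{b,b}$, and the value of $U(\tau)_{b,b}$ was computed in Proposition~\ref{prop:sym} to be $-\ov{\alpha}\beta/\ov{\beta}$. Since $U(\tau)_{a,a} = \alpha$, we get $\alpha\ov{\beta} = -\ov{\alpha}\beta$, i.e. $\alpha\ov{\beta}$ is purely imaginary. Combined with the normalization $|\alpha|^2 + |\beta|^2 = 1$, a short polar-coordinates argument gives $\alpha = e^{\ii\zeta}\cos\gamma$ and $\beta = \ii e^{\ii\zeta}\sin\gamma$ for real $\gamma,\zeta$: write $\alpha = \rho e^{\ii\phi}$, $\beta = \sigma e^{\ii\psi}$ with $\rho,\sigma\ge 0$; then $\alpha\ov\beta\in\ii\RR$ forces $\phi - \psi \equiv \pi/2\pmod\pi$, and absorbing the sign into $\gamma$ and setting $\zeta = \phi$ gives the claimed form with $\cos\gamma = \pm\rho$, $\sin\gamma = \pm\sigma$.

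For the converse, assume $\alpha = e^{\ii\zeta}\cos\gamma$ and $\beta = \ii e^{\ii\zeta}\sin\gamma$. Then $U(\tau)\ee_a = e^{\ii\zeta}(\cos\gamma\,\ee_a + \ii\sin\gamma\,\ee_b)$, and one would like to also pin down $U(\tau)\ee_b$. Using Proposition~\ref{prop:sym}, $U(\tau)_{b,b} = -\ov\alpha\beta/\ov\beta = e^{\ii\zeta}\cos\gamma = \alpha$ and $U(\tau)_{a,b} = \beta$, so by symmetry $U(\tau)\ee_b = e^{\ii\zeta}(\cos\gamma\,\ee_b + \ii\sin\gamma\,\ee_a)$ as well. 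Taking the difference and sum of the two vector equations and projecting onto eigenspaces gives, for every $\theta_r\in\Phi_a$,
\begin{align}
e^{-\ii\tau\theta_r}(E_r\ee_a - E_r\ee_b) &= e^{\ii\zeta}(\cos\gamma - \ii\sin\gamma)(E_r\ee_a - E_r\ee_b), \\
e^{-\ii\tau\theta_r}(E_r\ee_a + E_r\ee_b) &= e^{\ii\zeta}(\cos\gamma + \ii\sin\gamma)(E_r\ee_a + E_r\ee_b).
\end{align}
Now suppose, toward strong cospectrality, that there were some $r$ with $E_r\ee_a \neq \pm E_r\ee_b$. Since $a,b$ are parallel, both $E_r\ee_a - E_r\ee_b$ and $E_r\ee_a + E_r\ee_b$ are nonzero scalar multiples of $E_r\ee_a$, so both displayed equations are genuine scalar eigenvalue identities: $e^{-\ii\tau\theta_r} = e^{\ii\zeta}(\cos\gamma - \ii\sin\gamma)$ and $e^{-\ii\tau\theta_r} = e^{\ii\zeta}(\cos\gamma + \ii\sin\gamma)$. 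Subtracting forces $\sin\gamma = 0$, hence $\beta = 0$, contradicting $\beta\neq 0$. Therefore $E_r\ee_a = \pm E_r\ee_b$ for all $r$, i.e. $a$ and $b$ are strongly cospectral (and in particular cospectral).

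The only mildly delicate point is the bookkeeping in the converse: one must first recover $U(\tau)\ee_b$ from the known diagonal and off-diagonal entries before projecting, and one must use the parallel property to promote the projected vector equations to scalar equations. Neither step is hard; I expect the polar-coordinate normalization in the forward direction and this promotion step to be the two places where care is needed, but both are routine.
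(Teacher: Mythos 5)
Your proposal is correct and follows essentially the same route as the paper: the forward direction uses the diagonal-entry computation from Proposition~\ref{prop:sym} to show $\alpha\ov{\beta}\in\ii\RR$ and then normalizes, and the converse recovers $U(\tau)\ee_b$, projects the sum and difference onto eigenspaces, and derives $\sin\gamma=0$ (contradicting $\beta\neq 0$) whenever some $E_r\ee_a\neq\pm E_r\ee_b$. Your extra bookkeeping (explicitly invoking parallelness and the polar-coordinate step) just spells out details the paper leaves implicit.
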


Given strongly cospectral vertices $a$ and $b$ in a graph $X$,  define sets
\begin{equation}
\Phi_{a,b}^{+} = \{ \theta_r \ : \ E_r \ee_a =  E_r \ee_b\}
\quad\text{and}\quad
\Phi_{a,b}^{-} = \{ \theta_r \ : \ E_r \ee_a =  -E_r \ee_b\}.
\end{equation}
Note that $\Phi_a=\Phi_{a,b}^+ \cup \Phi_{a,b}^- = \Phi_b$.

\begin{theorem}\label{thm:equiv_cond}
Let $a$ and $b$ be strongly cospectral vertices in a graph $X$.
Then $(e^{\ii\zeta}\cos \gamma, \ii e^{\ii \zeta}\sin \gamma)$-revival occurs between $a$ and $b$ at time $\tau$ if and only if for all $\theta_r\in \Phi_a$
\begin{equation}\label{eqn:equiv_cond}
\tau(\theta_0 - \theta_r) =
\begin{cases}
0 \mod 2\pi & \text{if $\theta_r \in \Phi_{a,b}^+$,}\\
-2\gamma \mod 2\pi & \text{if $\theta_r \in \Phi_{a,b}^{-}$,}
\end{cases}
\end{equation}
and $\zeta = -\tau\theta_0-\gamma$.
\begin{proof}
We have  $U(\tau)\ee_a = e^{\ii \zeta}(\cos \gamma \ee_a + \ii \sin \gamma \ee_b)$ if and only if
\begin{equation} 
e^{-\ii\tau \theta_r} E_r \ee_a =
\begin{cases}
e^{\ii \zeta}(\cos \gamma  + \ii \sin \gamma) E_r \ee_a & \text{if $\theta_r \in \Phi_{a,b}^+$}\\
e^{\ii \zeta}(\cos \gamma  - \ii \sin \gamma) E_r \ee_a & \text{if $\theta_r \in \Phi_{a,b}^-$}
\end{cases}
\qquad \text{for all $\theta_r \in \Phi_a$.}
\end{equation}
Since, by Perron-Frobenius theorem, the largest eigenvalue $\theta_0$ belongs to $\Phi_{a,b}^+$, this is equivalent to
\begin{equation}
e^{\ii\tau (\theta_0-\theta_r)} =
\begin{cases}
1 & \text{if $\theta_r \in \Phi_{a,b}^+$}\\
\frac{\cos \gamma -\ii\sin \gamma}{\cos \gamma +\ii \sin\gamma} & \text{if $\theta_r \in \Phi_{a,b}^-$}
\end{cases}
\qquad \text{for all $\theta_r \in \Phi_a$.}
\end{equation}
The theorem follows from the fact that
\begin{equation}
\frac{\cos \gamma -\ii\sin \gamma}{\cos \gamma +\ii \sin\gamma} = e^{- 2 \gamma \ii}.
\end{equation}
\end{proof}
\end{theorem}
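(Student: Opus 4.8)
The plan is to expand the fractional revival identity $U(\tau)\ee_a = e^{\ii\zeta}(\cos\gamma\,\ee_a + \ii\sin\gamma\,\ee_b)$ in the spectral basis and then exploit strong cospectrality to decouple the equation eigenvalue by eigenvalue. First I would write $U(\tau)\ee_a = \sum_r e^{-\ii\tau\theta_r}E_r\ee_a$, and observe that, because $a$ and $b$ are strongly cospectral, for each $\theta_r \in \Phi_a$ we have $E_r\ee_b = \epsilon_r E_r\ee_a$ with $\epsilon_r = +1$ if $\theta_r \in \Phi_{a,b}^+$ and $\epsilon_r = -1$ if $\theta_r \in \Phi_{a,b}^-$. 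Substituting this into the right-hand side, the revival identity becomes $\sum_r e^{-\ii\tau\theta_r}E_r\ee_a = \sum_r e^{\ii\zeta}(\cos\gamma + \epsilon_r\ii\sin\gamma)E_r\ee_a$. Since the $E_r\ee_a$ with $\theta_r \in \Phi_a$ are nonzero and pairwise orthogonal (they lie in distinct eigenspaces), this single vector equation is equivalent to the scalar family $e^{-\ii\tau\theta_r} = e^{\ii\zeta}(\cos\gamma + \epsilon_r\ii\sin\gamma)$ for all $\theta_r \in \Phi_a$; this is exactly the case split recorded in the second displayed equation of the proof as written.

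Next I would normalize against the Perron eigenvalue. By the Perron--Frobenius theorem $E_0\ee_a$ has all-positive entries (as does $E_0\ee_b$), so $\epsilon_0 = +1$, i.e. $\theta_0 \in \Phi_{a,b}^+$. Evaluating the scalar equation at $r=0$ gives $e^{-\ii\tau\theta_0} = e^{\ii\zeta}$, hence $\zeta = -\tau\theta_0 \pmod{2\pi}$ — though I need the slightly sharper bookkeeping $\zeta = -\tau\theta_0 - \gamma$ claimed in the statement, which I will get in a moment after absorbing a phase. Dividing the general scalar equation by the $r=0$ one yields $e^{\ii\tau(\theta_0 - \theta_r)} = \cos\gamma + \epsilon_r\ii\sin\gamma$ for all $\theta_r \in \Phi_a$. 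For $\theta_r \in \Phi_{a,b}^+$ this reads $e^{\ii\tau(\theta_0-\theta_r)} = \cos\gamma + \ii\sin\gamma = e^{\ii\gamma}$, and for $\theta_r \in \Phi_{a,b}^-$ it reads $e^{\ii\tau(\theta_0-\theta_r)} = \cos\gamma - \ii\sin\gamma = e^{-\ii\gamma}$.

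This is not yet the form in \eqref{eqn:equiv_cond}, so the final step is a change of the free parameter to kill the extra $e^{\ii\gamma}$ on the $\Phi_{a,b}^+$ branch. The point is that the pair $(\gamma,\zeta)$ parametrizing $(\alpha,\beta)$ via $\alpha = e^{\ii\zeta}\cos\gamma$, $\beta = \ii e^{\ii\zeta}\sin\gamma$ is not unique: replacing $\gamma \mapsto \gamma + \pi$, $\zeta \mapsto \zeta + \pi$ leaves $(\alpha,\beta)$ fixed, and more usefully one can shift so that the $\Phi_{a,b}^+$ condition becomes homogeneous. Concretely, I would reparametrize by setting $\gamma' = \gamma$ but rewrite $e^{\ii\tau(\theta_0-\theta_r)} = e^{\ii\gamma}$ on the $+$ branch as $e^{\ii(\tau(\theta_0-\theta_r))} = e^{\ii\gamma}$ and on the $-$ branch as $e^{\ii\tau(\theta_0-\theta_r)} = e^{-\ii\gamma}$; subtracting a common $\gamma$ — equivalently, tracking that $U(\tau) = e^{-\ii\tau\theta_0}\sum_r e^{\ii\tau(\theta_0-\theta_r)}E_r$ and pulling a global $e^{-\ii\gamma}$ into the phase $\zeta$ — turns the $+$ branch into $\tau(\theta_0-\theta_r) \equiv 0 \pmod{2\pi}$ and the $-$ branch into $\tau(\theta_0-\theta_r) \equiv -2\gamma \pmod{2\pi}$, at the cost of the correction $\zeta = -\tau\theta_0 - \gamma$. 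Each implication is reversible: starting from \eqref{eqn:equiv_cond} and $\zeta = -\tau\theta_0-\gamma$, one reconstructs the scalar equations, sums over $r$, and recovers $U(\tau)\ee_a = e^{\ii\zeta}(\cos\gamma\,\ee_a + \ii\sin\gamma\,\ee_b)$. The identity $(\cos\gamma - \ii\sin\gamma)/(\cos\gamma + \ii\sin\gamma) = e^{-2\ii\gamma}$ is exactly the computation that makes the $-2\gamma$ appear.

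I expect the only genuine subtlety — the "main obstacle" — to be the phase bookkeeping in this last step: getting the shift of $\gamma$ and the corresponding value $\zeta = -\tau\theta_0-\gamma$ consistent, and making sure the reparametrization does not secretly change $(\alpha,\beta)$. Everything else (the spectral expansion, the decoupling, the use of Perron--Frobenius to locate $\theta_0$ in $\Phi_{a,b}^+$) is routine given Proposition \ref{prop:strongly_cospectral} and the definitions of $\Phi_{a,b}^{\pm}$.
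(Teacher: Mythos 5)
Your overall route is the same as the paper's: expand $U(\tau)\ee_a$ spectrally, use strong cospectrality to write $E_r\ee_b=\pm E_r\ee_a$ and decouple into the scalar equations $e^{-\ii\tau\theta_r}=e^{\ii\zeta}(\cos\gamma\pm\ii\sin\gamma)$ for $\theta_r\in\Phi_a$, place $\theta_0$ in $\Phi_{a,b}^+$ by Perron--Frobenius, and normalize against the $\theta_0$-equation using $(\cos\gamma-\ii\sin\gamma)/(\cos\gamma+\ii\sin\gamma)=e^{-2\ii\gamma}$. However, there is a concrete error in your middle step. The scalar equation at $r=0$ is $e^{-\ii\tau\theta_0}=e^{\ii\zeta}(\cos\gamma+\ii\sin\gamma)=e^{\ii(\zeta+\gamma)}$, not $e^{-\ii\tau\theta_0}=e^{\ii\zeta}$; it already pins $\zeta\equiv-\tau\theta_0-\gamma \pmod{2\pi}$, with no later ``absorption of a phase'' needed. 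Because you divided by this incorrect version of the $r=0$ equation, your intermediate conditions $e^{\ii\tau(\theta_0-\theta_r)}=\cos\gamma+\ii\sin\gamma$ on the $+$ branch and $e^{\ii\tau(\theta_0-\theta_r)}=\cos\gamma-\ii\sin\gamma$ on the $-$ branch are off by a uniform factor $e^{\ii\gamma}$; indeed, evaluated at $r=0$ the $+$ branch would read $1=e^{\ii\gamma}$, forcing $\gamma\equiv 0$ and hence $\beta=0$, contradicting the hypothesis of fractional revival.

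The repair you then invoke --- ``a change of the free parameter'', ``subtracting a common $\gamma$'' --- is not a legitimate move as described: once the revival $U(\tau)\ee_a=\alpha\ee_a+\beta\ee_b$ with $\beta\neq 0$ is fixed, $\zeta$ is not free (the only slack, $(\gamma,\zeta)\mapsto(\gamma+\pi,\zeta+\pi)$, does not remove a factor $e^{\ii\gamma}$), and $\tau(\theta_0-\theta_r)$ is a number determined by the graph and $\tau$, from which one cannot ``subtract $\gamma$''. Your parenthetical alternative --- keeping the factor $e^{\ii\gamma}$ and writing $e^{\ii\zeta}=e^{-\ii\tau\theta_0}e^{-\ii\gamma}$ with $\zeta=-\tau\theta_0-\gamma$ --- is exactly the correct computation, and it is what the paper does: dividing the general scalar equation by the correct $r=0$ equation gives at once $e^{\ii\tau(\theta_0-\theta_r)}=1$ on $\Phi_{a,b}^+$ and $e^{\ii\tau(\theta_0-\theta_r)}=e^{-2\ii\gamma}$ on $\Phi_{a,b}^-$, i.e.\ \eqref{eqn:equiv_cond}, with $\zeta$ as claimed and no reparametrization step at all. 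So your proposal is salvageable in one line, but as written the intermediate display is false and the justification offered for the final step is the wrong kind of argument; the reverse implication as you sketch it (reconstruct the scalar equations and sum over $r$) is fine.
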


\begin{corollary}\label{cor:ratio}
Suppose $X$ admits fractional revival between strongly cospectral vertices $a$ and $b$.
Then 
\begin{equation}
\frac{\theta_i-\theta_j}{\theta_r-\theta_s} \in \QQ,
\end{equation}
for all  $\theta_i, \theta_j, \theta_r, \theta_s \in \Phi_{a,b}^+$, or  for all  $\theta_i, \theta_j, \theta_r, \theta_s \in \Phi_{a,b}^-$, with $\theta_r\neq \theta_s$.
\end{corollary}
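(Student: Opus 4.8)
The plan is to extract the rationality from the two systems of congruences supplied by Theorem~\ref{thm:equiv_cond}. Suppose we have $(e^{\ii\zeta}\cos\gamma, \ii e^{\ii\zeta}\sin\gamma)$-revival between $a$ and $b$ at time $\tau$. For each eigenvalue $\theta_r \in \Phi_{a,b}^+$ we have $\tau(\theta_0 - \theta_r) \equiv 0 \pmod{2\pi}$, so there is an integer $m_r$ with $\tau(\theta_0 - \theta_r) = 2\pi m_r$. Similarly, for each $\theta_r \in \Phi_{a,b}^-$ there is an integer $n_r$ with $\tau(\theta_0 - \theta_r) = -2\gamma + 2\pi n_r$. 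The first step is simply to write all these relations out.

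Next I would take differences within a single class to kill the common offset. If $\theta_i, \theta_j \in \Phi_{a,b}^+$, then subtracting the two relations gives $\tau(\theta_j - \theta_i) = 2\pi(m_i - m_j)$, an integer multiple of $2\pi$. Likewise, if $\theta_r, \theta_s \in \Phi_{a,b}^+$ with $\theta_r \neq \theta_s$, then $\tau(\theta_s - \theta_r) = 2\pi(n_r - n_s)$ with $n_r \neq n_s$ (since the left side is nonzero). Dividing,
\begin{equation}
\frac{\theta_i - \theta_j}{\theta_r - \theta_s} = \frac{m_i - m_j}{n_r - n_s} \in \QQ.
\end{equation}
The same argument works verbatim for four eigenvalues all lying in $\Phi_{a,b}^-$: the $-2\gamma$ offset cancels when we subtract two relations from the same class, leaving $\tau(\theta_j - \theta_i) = 2\pi(n_i - n_j) \in 2\pi\ZZ$, and then the ratio is again a quotient of integers. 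One should note that $\tau \neq 0$ guarantees $\tau(\theta_r - \theta_s) \neq 0$ whenever $\theta_r \neq \theta_s$, so the denominator $n_r - n_s$ (or $m_r - m_s$) is a nonzero integer and the quotient is well-defined.

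I do not expect a genuine obstacle here; the statement is essentially an immediate corollary of Theorem~\ref{thm:equiv_cond}, and the only thing to be careful about is that the cancellation of offsets only happens \emph{within} a class, not across the two classes $\Phi_{a,b}^+$ and $\Phi_{a,b}^-$ — which is exactly why the corollary is stated with the ``all in one class'' hypothesis rather than for arbitrary eigenvalues in $\Phi_a$. The mildest subtlety worth a sentence is handling the case $|\Phi_{a,b}^-| \leq 1$ or $|\Phi_{a,b}^+| \leq 1$, in which one of the two alternatives is vacuous (there are not enough distinct eigenvalues in that class to form a ratio); the statement remains true trivially in that case, so no extra work is needed.
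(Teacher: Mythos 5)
Your argument is correct and is essentially the paper's own (implicit) proof: the corollary follows directly from Theorem~\ref{thm:equiv_cond} because subtracting the congruences within a single class $\Phi_{a,b}^{+}$ or $\Phi_{a,b}^{-}$ cancels the common offset and yields $\tau(\theta_i-\theta_j)\in 2\pi\ZZ$, so the ratio is a quotient of integers with nonzero denominator since $\tau\neq 0$. Only two cosmetic remarks: the assumption that the revival has the form $(e^{\ii\zeta}\cos\gamma,\ \ii e^{\ii\zeta}\sin\gamma)$ should be justified by citing Proposition~\ref{prop:strongly_cospectral} (strong cospectrality forces this form), and in the displayed ratio for the $\Phi_{a,b}^{+}$ case your integers should be denoted $m_r-m_s$ rather than $n_r-n_s$.
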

When perfect state transfer occurs from $a$ to $b$ then $\gamma$ in Equation~(\ref{eqn:equiv_cond}) is an odd integer multiple of $\frac{\pi}{2}$ and $\tau(\theta_0-\theta_r) \in \pi \ZZ$, for all $\theta_r\in \Phi_a(=\Phi_b)$.   In this case, the above ratio is rational, for all $\theta_i, \theta_j, \theta_r, \theta_s \in \Phi_a$ with $\theta_r\neq \theta_s$.

Godsil \cite[Theorem 6.1]{g12b} showed that periodicity at a vertex implies that the eigenvalues in its eigenvalue support are integers or quadratic integers of a specific form. The key lemma to his result was the so called ratio condition \cite[Theorem 3.1]{g12b}. We extract from the proof of \cite[Theorem 6.1]{g12b} a lemma that shall be quite useful to us.

\begin{lemma}\label{lem:RatioConditionQuadraticIntegers}
Let $\Phi$ be a set of real algebraic integers which is closed under taking algebraic conjugates, and such that, for all $\theta_i,\theta_j,\theta_r,\theta_s \in \Phi$, we have
\[
\frac{\theta_i - \theta_j}{\theta_r - \theta_s} \in \QQ,
\ \hspace{0.5in} \
\mbox{ with $\theta_{r} \neq \theta_{s} \in \Phi$.}
\]
Then the elements of $\Phi$ are either integers or quadratic integers, and, moreover, if $|\Phi| = n$, then there are integers $a$, $\Delta$ (square-free), and $\{b_r\}_{r = 1}^n$ such that $\theta \in \Phi$ implies that, for some $r$,
\[\theta = \frac{a + b_r \sqrt{\Delta}}{2}.\] \qed
\end{lemma}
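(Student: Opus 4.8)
The plan is to run the classical ratio-condition argument of Godsil in a self-contained way. Fix any two distinct elements $\theta_r \neq \theta_s$ of $\Phi$ to serve as a ``ruler.'' The hypothesis says that every $\theta \in \Phi$ can be written as $\theta = \theta_s + q(\theta)(\theta_r - \theta_s)$ for some rational $q(\theta)$, so after the affine change of variable $\theta \mapsto (\theta - \theta_s)/(\theta_r - \theta_s)$ the whole set lands inside $\QQ$. Now invoke the Galois action: let $K$ be the splitting field of the minimal polynomials of all elements of $\Phi$ over $\QQ$, and let $\sigma \in \mathrm{Gal}(K/\QQ)$ be arbitrary. First I would observe that $\sigma$ permutes $\Phi$ (this is exactly the closure-under-conjugates hypothesis), and that for any $\theta_i,\theta_j \in \Phi$,
\[
\frac{\theta_i - \theta_j}{\theta_r - \theta_s}
= \sigma\!\left(\frac{\theta_i - \theta_j}{\theta_r - \theta_s}\right)
= \frac{\sigma(\theta_i) - \sigma(\theta_j)}{\sigma(\theta_r) - \sigma(\theta_s)},
\]
because the middle expression is a rational number and $\sigma$ fixes $\QQ$ pointwise.

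Next I would exploit this identity twice. Setting $(\theta_i,\theta_j) = (\theta_r,\theta_s)$ gives $\sigma(\theta_r) - \sigma(\theta_s) = \theta_r - \theta_s$; feeding that back in, the identity collapses to $\sigma(\theta_i) - \sigma(\theta_j) = \theta_i - \theta_j$ for all $i,j$. In particular, for a fixed base point $\theta_0 \in \Phi$, every $\sigma$ satisfies $\sigma(\theta) - \theta = \sigma(\theta_0) - \theta_0$, a constant independent of $\theta \in \Phi$; call it $c_\sigma$. Summing over all $\theta$ in a single Galois orbit $O \subseteq \Phi$ shows $|O|\, c_\sigma = \sigma(\sum_{\theta \in O}\theta) - \sum_{\theta\in O}\theta = 0$, since the orbit sum is a symmetric function of the conjugates and hence rational, in fact a rational integer as the $\theta$ are algebraic integers. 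Therefore $c_\sigma = 0$, i.e. every $\sigma$ fixes $\theta - \theta_0$ for each $\theta \in \Phi$. Consequently $\theta - \theta_0 \in \QQ$ whenever $\theta_0$ has degree $1$; in general, each difference $\theta_i - \theta_j$ lies in $\QQ$, so $\QQ(\theta)$ is the same field $F = \QQ(\theta_i - \theta_j : i,j)$ for every $\theta \in \Phi$ with at least one $\theta$ not rational — and this common field $F = \QQ(\theta_0)$ has all conjugates of $\theta_0$ of the form $\theta_0 + (\text{rational})$, forcing $[F:\QQ] \le 2$. (If all differences are $0$, $\Phi$ is a single integer and we are done; if $F=\QQ$, all elements are rational algebraic integers, hence integers.)

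For the quadratic case, write $F = \QQ(\sqrt{\Delta})$ with $\Delta$ square-free. Each $\theta \in \Phi$ is an algebraic integer in $F$, so $\theta = \frac{a_\theta + b_\theta\sqrt{\Delta}}{2}$ with $a_\theta,b_\theta \in \ZZ$ of the same parity (when $\Delta\equiv 1 \bmod 4$) or $\theta = a_\theta + b_\theta \sqrt{\Delta}$ (when $\Delta \not\equiv 1$), which we absorb into the uniform form $\theta = \frac{a_\theta + b_\theta\sqrt{\Delta}}{2}$. The nontrivial automorphism $\sigma$ of $F$ sends $\theta \mapsto \frac{a_\theta - b_\theta\sqrt\Delta}{2}$, and $c_\sigma = \sigma(\theta) - \theta = -b_\theta\sqrt\Delta$ being independent of $\theta$ forces all the $b_\theta$ equal, say to a single integer — wait, more carefully: $c_\sigma$ constant means $b_\theta\sqrt\Delta$ is constant, hence $b_\theta$ takes a single value $b^{*}$ across all $\theta$; but then distinct $\theta$ are distinguished only by the integer $a_\theta$, giving $\theta = \frac{a_\theta + b^{*}\sqrt\Delta}{2}$. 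Setting $a := $ this common numerator is too rigid, so instead one keeps the $a_r := a_{\theta_r}$ varying and records $\theta_r = \frac{a + b_r\sqrt\Delta}{2}$ after relabeling — actually since $b_\theta$ is forced constant, the honest conclusion is even stronger, but to match the stated form with varying $b_r$ one drops the constancy step and simply records, for each of the $n$ elements, integers $a$ (the common ``real part'' $a_\theta$, which the previous paragraph showed is forced to be constant across $\Phi$ since $\theta - \theta_0 = \frac{b_\theta - b_{\theta_0}}{2}\sqrt\Delta$ is rational only if... ) — the main obstacle, and the one point I would be most careful about, is bookkeeping the parity/half-integer conventions so that the single pair $(a,\Delta)$ works simultaneously for all $n$ elements with only the $b_r$ varying; this is exactly where the ``$\sigma$ fixes all differences'' step does the work, since it shows $\theta_i - \theta_j \in \QQ \cap \frac{1}{2}\ZZ[\sqrt\Delta]\sqrt\Delta^{0}$ has the two $\sqrt\Delta$-coefficients equal, pinning down $a$. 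Everything else is the routine structure theory of quadratic integer rings.
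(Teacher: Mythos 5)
There is a genuine gap at the central Galois step, and it is fatal to the argument as written. From the identity $\frac{\theta_i-\theta_j}{\theta_r-\theta_s}=\frac{\sigma(\theta_i)-\sigma(\theta_j)}{\sigma(\theta_r)-\sigma(\theta_s)}$, substituting $(\theta_i,\theta_j)=(\theta_r,\theta_s)$ yields only $1=1$; it does not yield $\sigma(\theta_r)-\sigma(\theta_s)=\theta_r-\theta_s$, and that equality is false in general. Take $\Phi=\{\tfrac{1+\sqrt{5}}{2},\tfrac{1-\sqrt{5}}{2}\}$ (exactly the set $\Phi_{a,b}^{+}$ for $P_4$ in this paper): it satisfies every hypothesis of the lemma, yet the nontrivial automorphism of $\QQ(\sqrt{5})$ sends the difference $\sqrt{5}$ to $-\sqrt{5}$. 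Worse, if your collapse were correct, your own orbit-sum argument (which correctly gives $c_\sigma=0$ once one grants constancy of $\sigma(\theta)-\theta$) would force every element of $\Phi$ to be fixed by every automorphism, i.e.\ rational, hence an integer --- contradicting the quadratic-integer alternative that the lemma is precisely designed to allow. Your final paragraph exhibits the symptom: the (erroneous) derivation forces all $b_\theta$ equal with the rational parts $a_\theta$ varying, the opposite of the stated form, and you explicitly leave that mismatch unresolved.

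What the hypotheses actually give is weaker but sufficient, and is the content of Godsil's Theorem 6.1, which the paper simply invokes (adding only the trivial cases $|\Phi|\le 2$). Fix $\delta=\theta_r-\theta_s\neq 0$. Any automorphism $\sigma$ of the splitting field permutes $\Phi$, so $\sigma(\delta)$ is again a difference of elements of $\Phi$, whence $\sigma(\delta)=\mu_\sigma\,\delta$ with $\mu_\sigma\in\QQ$ --- a rational scalar, not necessarily $1$. The map $\sigma\mapsto\mu_\sigma$ is a homomorphism into $\QQ^{\times}$ with finite image, hence $\mu_\sigma=\pm 1$ for all $\sigma$; therefore $\delta^2$ is fixed by every automorphism and, being a rational algebraic integer, lies in $\ZZ$, so $\delta=c\sqrt{\Delta}$ with $\Delta$ square-free (possibly $\Delta=1$). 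Every element of $\Phi$ equals $\theta_s$ plus a rational multiple of $\delta$, and closure under conjugation makes the sum over $\Phi$ a rational algebraic integer, which places $\theta_s$, and hence all of $\Phi$, in $\QQ(\sqrt{\Delta})$; so the elements are integers or quadratic integers. Finally, since every difference of elements of $\Phi$ is a rational multiple of $\sqrt{\Delta}$, all elements share the same rational part, which is what pins down the single integer $a$ with only the coefficients $b_r$ varying --- the opposite bookkeeping from the one your draft was driven to.
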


The proof works exactly as the proof of \cite[Theorem 6.1]{g12b} for when $|\Phi| \geq 3$. If $|\Phi| \leq 2$, the fact the $\Phi$ is closed under taking algebraic conjugates leads to an immediate proof.

If two vertices $a$ and $b$ are strongly cospectral, they share the same eigenvalue support, which is partitioned as $\Phi_{a,b}^+$ and $\Phi_{a,b}^-$. Corollary \ref{cor:ratio} above establish the ratio condition for each of these parts. As a consequence, we can derive the result below.

\begin{theorem} \label{thm:eigenvalsFR}
Assume $X$ admits fractional revival between strongly cospectral vertices $a$ and $b$. Let $\theta_0,...,\theta_t$ be the eigenvalues in their support. Then these are either integers or quadratic integers. 
Moreover, there are integers $a^+$, $a^-$, $\Delta^+$ (square-free), $\Delta^-$ (square-free), 
and $\{b_r\}_{r=0}^t$, such that, for all $r = 0,..,t$,
\begin{enumerate}[(i)]
\item if $\theta_r \in \Phi_{a,b}^+$, then $\displaystyle \theta_r = \frac{a^+ + b_r \sqrt{\Delta^+}}{2}$, and
\item if $\theta_r \in \Phi_{a,b}^-$, then $\displaystyle  \theta_r = \frac{a^- + b_r \sqrt{\Delta^-}}{2}$.
\end{enumerate}
\begin{proof}
From Corollary \ref{cor:ratio} and Lemma \ref{lem:RatioConditionQuadraticIntegers}, it is enough to show that each of these sets is closed under taking conjugates. Let $\theta$ be an eigenvalue which is not an integer and $\vv$ any of its eigenvectors. Let $\theta'$ be an algebraic conjugate of $\theta$. There is a field automorphism of
$\QQ(\theta)$ that maps $\theta$ to $\theta'$. The same automorphism, let us call it $\Psi$ now, maps $\vv$ to a vector $\vv'$ which is an eigenvector for the eigenvalue $\theta'$. As $a$ and $b$ are strongly cospectral, if follows that
\[\vv_a = \pm \vv_b.\]
Thus, with the same sign, we have
\[\vv'_a = \pm \vv'_b,\]
as $\Psi(1) = 1$ and $\Psi(-1) = -1$. Hence $\theta \in \Phi_{a,b}^+ \implies \theta' \in \Phi_{a,b}^+$ and $\theta \in \Phi_{a,b}^- \implies \theta' \in \Phi_{a,b}^-$ as wished.
\end{proof}
\end{theorem}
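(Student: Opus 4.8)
The plan is to reduce the statement to Lemma~\ref{lem:RatioConditionQuadraticIntegers} by verifying that both $\Phi_{a,b}^+$ and $\Phi_{a,b}^-$ are sets of real algebraic integers that are closed under taking algebraic conjugates; once this is known, Corollary~\ref{cor:ratio} supplies the ratio condition on each set, and Lemma~\ref{lem:RatioConditionQuadraticIntegers} yields the desired form with a common $a^+,\Delta^+$ (resp.\ $a^-,\Delta^-$) and an individual $b_r$ for each eigenvalue. Since $A(X)$ is an integer matrix, all eigenvalues $\theta_r$ are algebraic integers and real (as $A(X)$ is symmetric), so the only real content is the closure under conjugation, together with the bookkeeping that the $b_r$'s from the two applications of the lemma can be indexed by a single list $\{b_r\}_{r=0}^t$.

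First I would fix an eigenvalue $\theta=\theta_r$ in the support that is not an integer, pick an eigenvector $\vv$ of $A(X)$ for $\theta$ with entries in $\QQ(\theta)$, and take any algebraic conjugate $\theta'$ of $\theta$. The Galois action: there is a field automorphism $\Psi$ of (the normal closure of) $\QQ(\theta)$ sending $\theta\mapsto\theta'$; applying $\Psi$ entrywise to $\vv$ produces a vector $\vv'$, and because $A(X)$ has rational (indeed integer) entries, $A(X)\vv'=\Psi(A(X)\vv)=\Psi(\theta\vv)=\theta'\vv'$, so $\theta'$ is again an eigenvalue and lies in the support (it is nonzero on $\ee_a$ since $\vv'_a=\Psi(\vv_a)$ and $\vv_a=\pm\vv_b\neq0$ by strong cospectrality on the $\theta$-eigenspace, which I should note is one-dimensional after projecting, or argue directly with $E_r$). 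Then strong cospectrality gives $\vv_a=\pm\vv_b$, and since $\Psi$ fixes $1$ and $-1$ we get $\vv'_a=\pm\vv'_b$ with the \emph{same} sign; hence $\theta'$ lies in $\Phi_{a,b}^+$ whenever $\theta$ does, and in $\Phi_{a,b}^-$ whenever $\theta$ does. This establishes the closure.

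With closure in hand, I would apply Lemma~\ref{lem:RatioConditionQuadraticIntegers} separately to $\Phi_{a,b}^+$ and to $\Phi_{a,b}^-$ (using Corollary~\ref{cor:ratio} to get the ratio hypothesis on each), obtaining integers $a^+,\Delta^+$ and coefficients for the elements of $\Phi_{a,b}^+$, and integers $a^-,\Delta^-$ and coefficients for those of $\Phi_{a,b}^-$. Relabelling, I merge the two coefficient lists into a single family $\{b_r\}_{r=0}^t$ indexed by the eigenvalues in the support, which is exactly the statement. One mild subtlety worth a sentence: an eigenvalue that is an integer is covered by either formula with $b_r=0$ (taking $a^\pm$ of the right parity), so there is no conflict between cases (i) and (ii) even when $\Phi_{a,b}^+$ or $\Phi_{a,b}^-$ contains only integers.

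The main obstacle is the closure argument, specifically making precise that applying a Galois automorphism to an eigenvector yields an eigenvector \emph{whose $a$-entry is nonzero}, so that $\theta'$ genuinely stays in the eigenvalue support and the $\pm$ sign from strong cospectrality transfers. The cleanest way around this is to work with the spectral idempotent: $E_r$ has entries in $\QQ(\theta_r)$ (it is a polynomial in $A(X)$ with coefficients in that field), $\Psi$ maps $E_r$ to the idempotent $E_{r'}$ for the conjugate eigenvalue $\theta'$, and $\Psi(E_r\ee_a)=E_{r'}\ee_a$; since $E_r\ee_a=\pm E_r\ee_b\neq 0$, applying $\Psi$ preserves both the nonvanishing and the sign, giving $E_{r'}\ee_a=\pm E_{r'}\ee_b\neq 0$ directly. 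Everything else is routine.
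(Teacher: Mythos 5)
Your proposal is correct and follows essentially the same route as the paper: apply a Galois automorphism fixing $\QQ$ to an eigenvector (or, in your refinement, to the spectral idempotent $E_r$) to show $\Phi_{a,b}^+$ and $\Phi_{a,b}^-$ are closed under algebraic conjugation, then invoke Corollary \ref{cor:ratio} and Lemma \ref{lem:RatioConditionQuadraticIntegers} on each set separately. Your idempotent-based variant is in fact a slightly more careful rendering of the paper's eigenvector argument, since it guarantees the nonvanishing at $a$ and the transfer of the sign without having to choose a particular eigenvector.
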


Some consequences ensue.

\begin{corollary}
Assume $X$ admits fractional revival between strongly cospectral vertices $a$ and $b$ at time $\tau$. With the notation of Theorem \ref{thm:eigenvalsFR}, let
\[g^+ = \gcd\left(\left\{\frac{\theta_r - \theta_s}{\sqrt{\Delta^+}}\right\}_{\theta_r,\theta_s \in \Phi_{a,b}^+}\right) \quad \text{and} \quad g^- = \gcd\left(\left\{\frac{\theta_r - \theta_s}{\sqrt{\Delta^-}}\right\}_{\theta_r,\theta_s \in \Phi_{a,b}^-}\right).\]
Then $\tau$ is an integer multiple of $2\pi/g^+\sqrt{\Delta^+}$ and also an integer multiple of $2\pi/g^-\sqrt{\Delta^-}$.

As a consequence, it follows that $\Delta^+ = \Delta^-$. 
\end{corollary}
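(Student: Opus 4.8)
The plan is to use Theorem~\ref{thm:equiv_cond} to extract the divisibility structure of $\tau$ from each of the two parts $\Phi_{a,b}^+$ and $\Phi_{a,b}^-$ separately, and then observe that a common valid $\tau$ forces the two quadratic fields to coincide. By Theorem~\ref{thm:eigenvalsFR} we may write the eigenvalues in $\Phi_{a,b}^+$ as $(a^+ + b_r\sqrt{\Delta^+})/2$ and those in $\Phi_{a,b}^-$ as $(a^- + b_r\sqrt{\Delta^-})/2$. First I would handle $\Phi_{a,b}^+$. For $\theta_r,\theta_s \in \Phi_{a,b}^+$ the equivalent condition \eqref{eqn:equiv_cond} (applied at both $\theta_r$ and $\theta_s$ and subtracted) gives $\tau(\theta_r - \theta_s) \equiv 0 \pmod{2\pi}$; writing $\theta_r - \theta_s = \frac{b_r - b_s}{2}\sqrt{\Delta^+}$, we get that $\tau \sqrt{\Delta^+}$ times each integer $\frac{b_r-b_s}{2}$... wait, let me be careful: $b_r - b_s$ need not be even, but $\theta_r - \theta_s$ is an algebraic integer in $\ZZ[\sqrt{\Delta^+}]$ or $\ZZ\left[\frac{1+\sqrt{\Delta^+}}{2}\right]$, so $\frac{\theta_r - \theta_s}{\sqrt{\Delta^+}}$ is at worst a half-integer; in any case $g^+$ as defined is a well-defined positive (half-)integer and $\frac{\theta_r - \theta_s}{\sqrt{\Delta^+}}$ is an integer multiple of $g^+$ for every pair. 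Hence $\tau g^+ \sqrt{\Delta^+} \equiv 0 \pmod{2\pi}$ after taking the gcd over all pairs, i.e.\ $\tau$ is an integer multiple of $2\pi/(g^+\sqrt{\Delta^+})$. The same argument applied verbatim to $\Phi_{a,b}^-$ — subtracting \eqref{eqn:equiv_cond} at two eigenvalues in $\Phi_{a,b}^-$, where the $-2\gamma$ terms cancel — yields that $\tau$ is an integer multiple of $2\pi/(g^-\sqrt{\Delta^-})$.

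For the final consequence, suppose toward a contradiction that $\Delta^+ \neq \Delta^-$. We have $\tau = \frac{2\pi}{g^+\sqrt{\Delta^+}}\, m = \frac{2\pi}{g^-\sqrt{\Delta^-}}\, n$ for some nonzero integers $m,n$ (nonzero because $\tau \neq 0$ by definition of fractional revival, and because both $\Phi_{a,b}^+$ and $\Phi_{a,b}^-$ are nonempty — the former contains $\theta_0$ by Perron--Frobenius, and the latter is nonempty since otherwise $\ee_a = \ee_b$). Rearranging gives $\sqrt{\Delta^+}/\sqrt{\Delta^-} = (g^- m)/(g^+ n) \in \QQ$, equivalently $\sqrt{\Delta^+ \Delta^-} \in \QQ$, which forces $\Delta^+ \Delta^-$ to be a perfect square. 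Since $\Delta^+$ and $\Delta^-$ are both square-free, this is only possible when $\Delta^+ = \Delta^-$ (if a prime $p$ divides one it must divide the other; writing $\Delta^+ = \prod p_i$, $\Delta^- = \prod q_j$ distinct primes each, the product is square-free times square-free, a square only if the two sets of primes coincide). This contradicts our assumption, so $\Delta^+ = \Delta^-$.

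The only genuinely delicate point is the arithmetic bookkeeping around whether $\frac{\theta_r - \theta_s}{\sqrt{\Delta^+}}$ is an integer or a half-integer, so that the gcd $g^+$ is meaningful and the statement "$\tau$ is an integer multiple of $2\pi/(g^+\sqrt{\Delta^+})$" is literally correct; this is handled by noting that all the $b_r$ (for $r$ in a fixed part) have the same parity when $\Delta^+ \equiv 1 \pmod 4$ and are all even otherwise — which is exactly the content of Lemma~\ref{lem:RatioConditionQuadraticIntegers} applied to that part — so all differences $\frac{\theta_r - \theta_s}{\sqrt{\Delta^+}} = \frac{b_r - b_s}{2}$ lie in a common $\frac12\ZZ$ or $\ZZ$ and their gcd is well-defined. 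Everything else is the short modular-arithmetic manipulation of \eqref{eqn:equiv_cond} described above, plus the elementary number theory in the last paragraph; I expect no real obstacle beyond stating the gcd carefully.
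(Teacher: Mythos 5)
Your proof is correct and takes essentially the same route as the paper's (much terser) proof: Theorem \ref{thm:equiv_cond} gives $\tau(\theta_r-\theta_s)\equiv 0 \pmod{2\pi}$ for pairs within the same part, the differences scaled by $\sqrt{\Delta^{\pm}}$ are integers, and a gcd/Bezout step plus the square-freeness argument forces $\Delta^+=\Delta^-$. The extra bookkeeping you supply (parity of the $b_r$, nonemptiness of the two parts) only fills in details the paper leaves implicit, so there is no substantive divergence.
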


\begin{proof}
This is a consequence of the fact that the differences of the eigenvalues are integers, and that Theorem \ref{thm:equiv_cond} implies that $\tau (\theta_r - \theta_s) \equiv 0 \pmod{2\pi}$ whenever $\theta_r,\theta_s$ are both in $\Phi_{a,b}^+$ or both in $\Phi_{a,b}^-$.
\end{proof}

The corollary below follows from an argument similar to the one in \cite[Corollary 6.2]{g12b}, taking into account Theorem \ref{thm:eigenvalsFR}.

\begin{corollary}
There are only finitely many connected graphs with maximum valency at most $k$ where fractional revival between cospectral vertices occur.
\end{corollary}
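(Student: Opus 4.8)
The plan is to mimic the classical argument of Godsil showing that there are only finitely many connected graphs of bounded valency admitting perfect state transfer, adapting it through Theorem~\ref{thm:eigenvalsFR}. First I would reduce to the strongly cospectral case: by Proposition~\ref{prop:strongly_cospectral}, fractional revival between cospectral vertices $a$ and $b$ forces $a$ and $b$ to be \emph{strongly} cospectral, so Theorem~\ref{thm:eigenvalsFR} applies and the eigenvalues in the (common) support $\Phi_a = \Phi_{a,b}^+ \cup \Phi_{a,b}^-$ are integers or quadratic integers of the prescribed form $\theta_r = (a^\pm + b_r\sqrt{\Delta^\pm})/2$.

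The key structural fact to exploit is that the largest eigenvalue $\theta_0$ lies in $\Phi_{a,b}^+$ (Perron--Frobenius) and that, for a connected graph of maximum valency at most $k$, we have $\theta_0 \le k$. Combined with interlacing / the fact that all eigenvalues lie in $[-k,k]$, every eigenvalue $\theta_r$ in the support satisfies $|\theta_r| \le k$. Now the quadratic-integer description bounds everything: once $\Delta^+$ and $\Delta^-$ (by the preceding corollary, $\Delta^+ = \Delta^-$, so call it $\Delta$) are fixed, the integers $a^\pm$ and $b_r$ are constrained by $|a^\pm + b_r\sqrt\Delta| \le 2k$, and there are only finitely many square-free $\Delta \le $ something like $4k^2$ that can occur (since $\sqrt\Delta \le 2k$ forces $\Delta \le 4k^2$). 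Hence the entire multiset of eigenvalues in $\Phi_a$ is drawn from a finite set of possibilities; in particular the eigenvalue support has a bounded number of elements and each element takes one of finitely many values. This is exactly the point where Godsil's argument in \cite[Corollary 6.2]{g12b} concludes, and I would follow it: a connected graph of bounded valency in which some pair of vertices has an eigenvalue support of bounded size, with eigenvalues from a fixed finite set, can only be one of finitely many graphs --- because such a graph has bounded diameter (the number of distinct eigenvalues bounds the diameter) and bounded valency, hence bounded order.

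More concretely, the last step runs as follows. Let $a,b$ be the cospectral vertices admitting fractional revival in $X$, with $X$ connected of maximum valency at most $k$. The walk-module generated by $\ee_a$ under $A(X)$ has dimension $|\Phi_a|$, which we have just bounded, say by $N = N(k)$. On the other hand, for a connected graph the sequence $\ee_a, A\ee_a, A^2\ee_a,\dots$ eventually spans, within the walk module, vectors supported on all of $V(X)$ that are within the appropriate distance of $a$; since the walk module has dimension at most $N$, vertex $a$ has eccentricity at most $N-1$, so $X$ has diameter at most $N-1$. A connected graph of diameter at most $N-1$ and maximum valency at most $k$ has at most $1 + k + k(k-1) + \dots + k(k-1)^{N-2}$ vertices, a bound depending only on $k$. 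There are only finitely many graphs on a bounded number of vertices, which proves the claim.

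The main obstacle I anticipate is making the diameter bound precise and clean: one must argue carefully that $|\Phi_a|$ bounds the eccentricity of $a$ (this is standard --- it follows because the vectors $E_r \ee_a$ for $\theta_r \in \Phi_a$ span the cyclic $A$-module through $\ee_a$, and in a connected graph that module contains a vector with nonzero entry at every vertex within distance $|\Phi_a|-1$, but not beyond, unless the module is the whole space). Everything else --- the reduction to strong cospectrality, the eigenvalue bound $|\theta_r|\le k$, the finiteness of admissible $\Delta$ --- is routine given Theorem~\ref{thm:eigenvalsFR} and the preceding corollary. I would cite \cite[Corollary 6.2]{g12b} for the parallel perfect-state-transfer statement and simply indicate that the same argument goes through verbatim once Theorem~\ref{thm:eigenvalsFR} supplies the quadratic-integer structure in place of Godsil's Theorem~6.1.
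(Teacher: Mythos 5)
Your proposal is correct and follows essentially the same route as the paper, which simply invokes Godsil's argument for \cite[Corollary 6.2]{g12b} with Theorem~\ref{thm:eigenvalsFR} supplying the integer/quadratic-integer structure; you have merely written out that argument in full (reduction to strong cospectrality via Proposition~\ref{prop:strongly_cospectral}, the bound $|\theta_r|\le k$, finitely many admissible $\Delta$, bounded support size, bounded eccentricity, bounded order). The only nit is that eccentricity of $a$ at most $N-1$ gives diameter at most $2(N-1)$ rather than $N-1$, which does not affect the finiteness conclusion.
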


We see below that fractional revival between cospectral vertices implies either perfect state transfer, 
periodicity, or pretty good state transfer in the graph.

\begin{corollary}\label{cor:gammaQ}
Suppose $(e^{\ii\zeta}\cos \gamma, \ii e^{\ii \zeta}\sin \gamma)$-revival occurs between strongly cospectral vertices $a$ and $b$ in a graph $X$ at time $\tau$.
If $\gamma = \frac{p}{q} \pi$ for some coprime integers $p$ and $q$ then $X$ is periodic at both $a$ and $b$ at time $q\tau$.
If $q$ is even then $X$ admits perfect state transfer between $a$ and $b$ at time $\frac{q}{2}\tau$.
\begin{proof}
For any integer $k$, Equation(\ref{eqn:equiv_cond})  yields
\begin{equation}
k \tau(\theta_0 - \theta_r) =
\begin{cases}
0 \mod 2\pi & \text{if $\theta_r \in \Phi_{a,b}^+$}\\
-2\gamma k \mod 2\pi & \text{if $\theta_r \in \Phi_{a,b}^{-}$}
\end{cases}
\qquad \text{for all $\theta_{r} \in \Phi_a$.}
\end{equation}
When $\gamma=\frac{p}{q} \pi$,  Theorem~\ref{thm:equiv_cond} implies $U(q\tau) \ee_a = \pm e^{\ii q\zeta} \ee_a$.
If $\frac{q}{2} \in \ZZ$, then $p$ is odd and $U(\frac{q}{2}\tau) \ee_a = \pm e^{\ii \frac{q}{2} \zeta} \ii \ee_b$.
\end{proof}
\end{corollary}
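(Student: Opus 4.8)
The plan is to invoke Theorem~\ref{thm:equiv_cond} and then iterate the quantum walk. Writing the spectral decomposition $A(X)=\sum_r\theta_rE_r$, for any integer $k$ we have
\[
U(k\tau)\ee_a = e^{-\ii k\tau\theta_0}\sum_r e^{\ii k\tau(\theta_0-\theta_r)}E_r\ee_a .
\]
Since the congruences in \eqref{eqn:equiv_cond} hold modulo $2\pi$, multiplying them through by $k$ is harmless and yields $e^{\ii k\tau(\theta_0-\theta_r)}=1$ for $\theta_r\in\Phi_{a,b}^+$ and $e^{\ii k\tau(\theta_0-\theta_r)}=e^{-2\ii\gamma k}$ for $\theta_r\in\Phi_{a,b}^-$.

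Next I would package the $\Phi_{a,b}^+$ and $\Phi_{a,b}^-$ contributions. Because $a$ and $b$ are strongly cospectral, $E_r\ee_b=E_r\ee_a$ on $\Phi_{a,b}^+$ and $E_r\ee_b=-E_r\ee_a$ on $\Phi_{a,b}^-$, so $\sum_{\theta_r\in\Phi_{a,b}^+}E_r\ee_a=\tfrac12(\ee_a+\ee_b)$ and $\sum_{\theta_r\in\Phi_{a,b}^-}E_r\ee_a=\tfrac12(\ee_a-\ee_b)$. Substituting gives
\[
U(k\tau)\ee_a = e^{-\ii k\tau\theta_0}\Big(\tfrac{1+e^{-2\ii\gamma k}}{2}\,\ee_a+\tfrac{1-e^{-2\ii\gamma k}}{2}\,\ee_b\Big),
\]
and the same computation with $E_r\ee_b=\pm E_r\ee_a$ gives the analogous formula for $U(k\tau)\ee_b$ with the roles of $\ee_a$ and $\ee_b$ exchanged.

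Now set $\gamma=\tfrac pq\pi$ with $\gcd(p,q)=1$. Choosing $k=q$ makes $e^{-2\ii\gamma k}=e^{-2\pi\ii p}=1$, so $U(q\tau)\ee_a=e^{-\ii q\tau\theta_0}\ee_a$ and likewise for $\ee_b$, which is periodicity at both vertices at time $q\tau$; recalling $\zeta=-\tau\theta_0-\gamma$ one checks the scalar equals $\pm e^{\ii q\zeta}$. If $q$ is even then $p$ is odd, and choosing $k=q/2$ gives $e^{-2\ii\gamma k}=e^{-\ii p\pi}=-1$, whence $U(\tfrac q2\tau)\ee_a=e^{-\ii (q/2)\tau\theta_0}\ee_b$, i.e. perfect state transfer between $a$ and $b$ at time $\tfrac q2\tau$; here $e^{-\ii (q/2)\tau\theta_0}=\pm\ii\,e^{\ii (q/2)\zeta}$ since $e^{-\ii p\pi/2}=\pm\ii$.

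There is no real obstacle here: the statement is essentially a bookkeeping consequence of Theorem~\ref{thm:equiv_cond}. The only points requiring a bit of care are tracking the unimodular phase $e^{-\ii k\tau\theta_0}$ so the conclusion is expressed up to the correct global phase, and using strong cospectrality at exactly the right moment, namely to collapse the $\Phi_{a,b}^{\pm}$ eigenspace sums to $\tfrac12(\ee_a\pm\ee_b)$.
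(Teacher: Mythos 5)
Your proposal is correct and follows essentially the same route as the paper: multiply the phase congruences of Theorem~\ref{thm:equiv_cond} by $k$ and evaluate at $k=q$ and $k=q/2$, using $\gcd(p,q)=1$ to get $e^{-2\ii\gamma k}=1$ (periodicity) or $-1$ (perfect state transfer). The only cosmetic difference is that the paper re-invokes Theorem~\ref{thm:equiv_cond} as a black box, while you unwind it via the spectral decomposition and the collapse $\sum_{\theta_r\in\Phi_{a,b}^{\pm}}E_r\ee_a=\tfrac12(\ee_a\pm\ee_b)$, which is the same computation made explicit; your phase bookkeeping with $\zeta=-\tau\theta_0-\gamma$ matches the stated scalars.
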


\begin{corollary}
If balanced fractional revival occurs between strongly cospectral vertices $a$ and $b$ at time $\tau$ in a graph $X$,
then $X$ admits perfect state transfer between $a$ and $b$ at time $2\tau$, and $X$ is periodic at $a$ and $b$ 
at time $4\tau$.
\end{corollary}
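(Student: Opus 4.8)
The plan is to reduce everything to Corollary~\ref{cor:gammaQ}. First, since $a$ and $b$ are strongly cospectral and $(\alpha,\beta)$-revival occurs between them at time $\tau$, Proposition~\ref{prop:strongly_cospectral} lets me write $\alpha = e^{\ii\zeta}\cos\gamma$ and $\beta = \ii e^{\ii\zeta}\sin\gamma$ for some real numbers $\gamma,\zeta$. The hypothesis that the revival is balanced means $|\alpha| = |\beta|$, i.e. $|\cos\gamma| = |\sin\gamma|$; combined with $\beta\neq 0$ (hence $\sin\gamma\neq 0$, and in fact also $\cos\gamma\neq 0$), this forces $\tan\gamma = \pm 1$, so $\gamma = \frac{(2k+1)\pi}{4}$ for some integer $k$.

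Second, I write $\gamma = \frac{p}{q}\pi$ with $p = 2k+1$ and $q = 4$; since $p$ is odd, the integers $p$ and $q$ are coprime, so Corollary~\ref{cor:gammaQ} applies directly. It gives that $X$ is periodic at both $a$ and $b$ at time $q\tau = 4\tau$, and, because $q = 4$ is even, that $X$ admits perfect state transfer between $a$ and $b$ at time $\frac{q}{2}\tau = 2\tau$, which is exactly the claim.

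There is essentially no obstacle here: all the substance lives in Proposition~\ref{prop:strongly_cospectral} and Corollary~\ref{cor:gammaQ}. The only point needing a moment's care is extracting the precise value of $\gamma$ from the balanced condition and recording it with denominator $q = 4$ (rather than in lowest terms, which could be $1$ or $2$); having $q$ even and a multiple of $4$ is what simultaneously yields the perfect-state-transfer conclusion at $2\tau$ and the periodicity conclusion at $4\tau$.
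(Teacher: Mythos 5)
Your proof is correct and follows exactly the route the paper intends: Proposition~\ref{prop:strongly_cospectral} puts the amplitudes in the form $e^{\ii\zeta}\cos\gamma$, $\ii e^{\ii\zeta}\sin\gamma$, the balanced condition forces $\gamma=\tfrac{(2k+1)\pi}{4}$, and Corollary~\ref{cor:gammaQ} with $q=4$ gives periodicity at $4\tau$ and perfect state transfer at $2\tau$. (Only your parenthetical remark is off: since the numerator $2k+1$ is odd, $\tfrac{2k+1}{4}$ is already in lowest terms, so the denominator can never reduce to $1$ or $2$ --- which is harmless and in fact strengthens the point.)
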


When $\gamma$ is not a rational multiple of $\pi$,  we get pretty good state transfer between $a$ and $b$ as a consolation.
We say {\em pretty good (or almost perfect) state transfer} (see \cite{g12,vz12})
occurs between $a$ and $b$ if for all $\epsilon >0$, there exists a time $\tau'$  such that
\begin{equation}
|U(\tau') \ee_a - \nu \ee_b | < \epsilon,
\end{equation}
for some unimodular complex number $\nu$.
We need to use Kronecker's approximation theorem to prove this consolation.
\begin{theorem}[\cite{hw00}, see Theorem 442] \label{thm: kronecker} 
Let $1,\lambda_{1},\ldots,\lambda_{m}$ be linearly independent over $\Q$.
Let $\alpha_{1},\ldots,\alpha_{m}$ be arbitrary real numbers,
and let $N,\epsilon$ be positive real numbers.
Then there are integers $\ell > N$ and $q_1,\ldots, q_m$
so that
\begin{equation}\label{eq: kroneckers-theorem}
|\ell\lambda_{k} - q_{k} - \alpha_{k}| < \epsilon,
\end{equation}
for each $k=1,\ldots,m$.
\end{theorem}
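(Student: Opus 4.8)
This is a classical statement---it is Theorem 442 of Hardy and Wright \cite{hw00}---so in the paper we only cite it; the sketch below records the argument I would give if it had to be supplied. The plan is to reduce the approximation inequality \eqref{eq: kroneckers-theorem} to a density statement on the $m$-dimensional torus $\RR^m/\ZZ^m$ and then establish that density by duality. Write $\Lambda=(\lambda_1,\ldots,\lambda_m)$ and let $\pi\colon\RR^m\to\RR^m/\ZZ^m$ be the quotient map. For a fixed $\ell$, choosing integers $q_1,\ldots,q_m$ with $|\ell\lambda_k-q_k-\alpha_k|<\epsilon$ for all $k$ is possible exactly when $\pi(\ell\Lambda)$ lies within $\epsilon$, in the sup metric on the torus, of $\pi(\alpha_1,\ldots,\alpha_m)$. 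So it suffices to prove that the orbit $S=\{\pi(\ell\Lambda):\ell\in\ZZ\}$ is dense in $\RR^m/\ZZ^m$, and then to note that any target point can still be hit using arbitrarily large $\ell$.

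For the density, I would argue as follows. The set $S$ is a subgroup of the compact abelian group $\RR^m/\ZZ^m$, hence its closure $H=\overline{S}$ is a closed subgroup. If $H$ were proper, there would be a nontrivial continuous character of $\RR^m/\ZZ^m$ which is trivial on $H$; every such character has the form $x\mapsto e^{2\pi\ii\,c\cdot x}$ with $c=(c_1,\ldots,c_m)\in\ZZ^m\setminus\{0\}$. Evaluating it at the generator $\pi(\Lambda)\in H$ gives $e^{2\pi\ii(c_1\lambda_1+\cdots+c_m\lambda_m)}=1$, i.e. $c_1\lambda_1+\cdots+c_m\lambda_m\in\ZZ$---a nontrivial $\QQ$-linear dependence among $1,\lambda_1,\ldots,\lambda_m$, contradicting the hypothesis. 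Hence $H=\RR^m/\ZZ^m$, so $S$ is dense.

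Finally, to force $\ell>N$: the tail $\{\pi(\ell\Lambda):\ell>N\}$ differs from the dense set $S$ by at most finitely many points, and the torus has no isolated points, so the tail is dense as well; picking $\ell>N$ with $\pi(\ell\Lambda)$ in the $\epsilon$-ball about $\pi(\alpha_1,\ldots,\alpha_m)$ and reading off the $q_k$ completes the argument. The one non-elementary ingredient---that a proper closed subgroup of the torus is annihilated by a nontrivial character (Pontryagin duality)---is the step I would single out as the heart of the matter; it can be bypassed by instead invoking Weyl's equidistribution theorem, whose hypothesis is precisely that $1,\lambda_1,\ldots,\lambda_m$ are $\QQ$-linearly independent, which is exactly what forces every nontrivial Weyl sum $\sum_{\ell\le L}e^{2\pi\ii\,c\cdot\ell\Lambda}$ to be a geometric sum of bounded size, hence $o(L)$, yielding density with unboundedly large $\ell$ in one stroke.
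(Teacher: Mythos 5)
The paper does not prove this statement at all—it is quoted verbatim as Theorem 442 of Hardy and Wright—so there is no internal proof to compare against; what matters is only whether your sketch is sound. It essentially is: reducing \eqref{eq: kroneckers-theorem} to density of the orbit of $\pi(\Lambda)$ in $\RR^m/\ZZ^m$ and killing a proper closed subgroup with a nontrivial character $x\mapsto e^{2\pi\ii\,c\cdot x}$, $c\in\ZZ^m\setminus\{0\}$, is the standard duality argument, and the hypothesis that $1,\lambda_1,\ldots,\lambda_m$ are $\QQ$-linearly independent is used exactly where it should be ($c\cdot\Lambda\notin\ZZ$). The one slip is your handling of the constraint $\ell>N$: the set $S=\{\pi(\ell\Lambda):\ell\in\ZZ\}$ is indexed by \emph{all} integers, so the tail $\{\pi(\ell\Lambda):\ell>N\}$ differs from $S$ by infinitely many points (all $\ell\le N$, negatives included), and "dense minus finitely many points" does not apply. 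This is easily repaired—either by your own fallback (Weyl's criterion: the sums $\sum_{\ell\le L}e^{2\pi\ii\,c\cdot\ell\Lambda}$ are bounded geometric sums, so the forward orbit $\ell\ge 1$ is equidistributed, hence meets every ball for arbitrarily large $\ell$), or by the general fact that the closure of a sub-semigroup of a compact group is a subgroup, so the forward orbit already has dense closure—so I would count it as a repairable inaccuracy rather than a fatal gap, but the character-theoretic paragraph alone does not deliver the "$\ell>N$" clause as written. For what it is worth, Hardy and Wright's own proofs of Theorem 442 are elementary and inductive (Lettenmeyer's and Estermann's arguments), avoiding both duality and equidistribution; your route is shorter but leans on Pontryagin duality or Weyl, which is a heavier toolkit than the paper's number-theoretic context strictly needs.
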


\begin{corollary}\label{cor:pgst}
Suppose $(e^{\ii\zeta}\cos \gamma, \ii e^{\ii \zeta}\sin \gamma)$-revival occurs between strongly cospectral vertices $a$ and $b$ in a graph $X$ at time $\tau$.
If $\gamma = \lambda \pi$ for some irrational number $\lambda$ then $X$ admits pretty good state transfer between $a$ and $b$.
\begin{proof}
Applying Kronecker's approximation theorem to $\lambda_1= \frac{-2\gamma}{\pi}$ and $\alpha_1= \frac{1}{2}$,  there exist integers $\ell$ and $q_1$
satisfying
\begin{equation}
\frac{-2\gamma \ell}{\pi} \approx \frac{1}{2} + q_1.
\end{equation}
Equivalently,
$-4\gamma \ell \approx \pi+2q_1\pi$
and
\begin{equation}
2 \ell \tau(\theta_0 - \theta_r) \approx
\begin{cases}
0 \mod 2\pi & \text{if $\theta_r \in \Phi_{a,b}^+$}\\
\pi \mod 2\pi & \text{if $\theta_r \in \Phi_{a,b}^{-}$}
\end{cases}
\qquad \text{for all $r$.}
\end{equation}
It follow from Theorem~\ref{thm:equiv_cond} that $U(2\ell \tau) \ee_a \approx -\ii e^{\ii \zeta} \ee_b$.
\end{proof}
\end{corollary}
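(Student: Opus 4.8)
The plan is to iterate the revival so that the whole statement collapses to a single one-dimensional Diophantine approximation, which Kronecker's theorem (Theorem~\ref{thm: kronecker}, used with $m=1$) then resolves. First I would note, exactly as in the computation that opens the proof of Corollary~\ref{cor:gammaQ}, that multiplying the congruences \eqref{eqn:equiv_cond} through by an integer $k$ gives, for every $k \in \ZZ$,
\begin{equation}
k\tau(\theta_0 - \theta_r) \equiv
\begin{cases}
0 \pmod{2\pi} & \text{if } \theta_r \in \Phi_{a,b}^+, \\
-2k\gamma \pmod{2\pi} & \text{if } \theta_r \in \Phi_{a,b}^-.
\end{cases}
\end{equation}
By the equivalence in Theorem~\ref{thm:equiv_cond}, applied with $\tau$ replaced by $k\tau$ and $\gamma$ by $k\gamma$ (so that the phase becomes $k\zeta = -k\tau\theta_0 - k\gamma$), this says that $X$ has $(e^{\ii k\zeta}\cos(k\gamma),\, \ii e^{\ii k\zeta}\sin(k\gamma))$-revival between $a$ and $b$ at time $k\tau$; equivalently, $U(k\tau)\ee_a = e^{\ii k\zeta}\bigl(\cos(k\gamma)\,\ee_a + \ii\sin(k\gamma)\,\ee_b\bigr)$. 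The key structural point is that along the times $k\tau$ the only quantity that moves is $k\gamma$ modulo $\pi$, while the $\Phi_{a,b}^+$ part of the state is reproduced \emph{exactly}.

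Next, since $\gamma = \lambda\pi$ with $\lambda$ irrational, the numbers $1$ and $\lambda$ are linearly independent over $\QQ$, so I would invoke Theorem~\ref{thm: kronecker} with $\lambda_1 = \lambda$, $\alpha_1 = \tfrac12$, and $N$ arbitrary, producing integers $\ell$ and $q_1$ with $|\ell\lambda - q_1 - \tfrac12|$ as small as desired. Then $\ell\gamma = \ell\lambda\pi$ differs from $q_1\pi + \tfrac{\pi}{2}$ by a small multiple of $\pi$, so $\cos(\ell\gamma)$ is close to $0$ and $|\sin(\ell\gamma)|$ is close to $1$, with errors controlled by $|\ell\lambda - q_1 - \tfrac12|$ through the smoothness of $\cos$ and $\sin$. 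Substituting into the identity for $U(\ell\tau)\ee_a$ above, the vector $U(\ell\tau)\ee_a$ lies within $|\cos(\ell\gamma)| + \bigl| \,|\sin(\ell\gamma)| - 1 \,\bigr|$ of the unimodular multiple $\pm\ii e^{\ii\ell\zeta}\ee_b$ of $\ee_b$. Since this bound can be made smaller than any prescribed $\epsilon > 0$, pretty good state transfer from $a$ to $b$ follows; Proposition~\ref{prop:sym} (or just the symmetry of $U(\ell\tau)$) yields it from $b$ to $a$ as well.

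I do not anticipate a real obstacle. The only step needing a little care is checking that the iteration is legitimate, i.e.\ that after replacing $\tau$ by $k\tau$ the hypotheses of Theorem~\ref{thm:equiv_cond} still hold with angle $k\gamma$ and phase $k\zeta$; this is immediate from the displayed congruences together with $\zeta = -\tau\theta_0 - \gamma$. Everything after that is routine approximation: a single real parameter $k\gamma \bmod \pi$ has to be driven near $\tfrac{\pi}{2}$, the eigenvalues in $\Phi_{a,b}^+$ contribute no error at all, and the convergence of $U(\ell\tau)\ee_a$ to a scalar multiple of $\ee_b$ is just continuity of the trigonometric coefficients.
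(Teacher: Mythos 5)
Your proposal is correct and follows essentially the same route as the paper: both hinge on applying Kronecker's theorem (Theorem~\ref{thm: kronecker} with $m=1$) to the single irrational parameter so that the angle is driven to $\tfrac{\pi}{2}$ modulo $\pi$, the only difference being that you first iterate Theorem~\ref{thm:equiv_cond} to get the exact identity $U(k\tau)\ee_a = e^{\ii k\zeta}\bigl(\cos(k\gamma)\,\ee_a + \ii\sin(k\gamma)\,\ee_b\bigr)$ and then finish by continuity of $\cos$ and $\sin$, whereas the paper approximates the spectral congruences directly (with $\lambda_1 = -2\gamma/\pi$ and time $2\ell\tau$) and appeals to Theorem~\ref{thm:equiv_cond} in an approximate sense. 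Your organization makes the final approximation step slightly more explicit, but the underlying argument is the same.
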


\begin{theorem}
If fractional revival occurs between two strongly cospectral vertices $a$ and $b$ in $X$,
then $X$ has perfect state transfer or pretty good state transfer from $a$ to $b$, or $X$ 
is periodic at $a$ and $b$ at the same time.
\end{theorem}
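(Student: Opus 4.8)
The plan is to reduce the statement entirely to the two structural corollaries already proved, Corollary~\ref{cor:gammaQ} and Corollary~\ref{cor:pgst}, via a simple dichotomy. First I would apply Proposition~\ref{prop:strongly_cospectral}: since $a$ and $b$ are strongly cospectral and $(\alpha,\beta)$-revival occurs between them at some time $\tau$, there exist reals $\gamma,\zeta$ with $\alpha = e^{\ii\zeta}\cos\gamma$ and $\beta = \ii e^{\ii\zeta}\sin\gamma$. The defining requirement $\beta\neq 0$ forces $\sin\gamma\neq 0$, hence $\gamma\notin\pi\ZZ$; in particular $\gamma/\pi$ is a well-defined real number, and it is either rational or irrational. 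These two cases are exhaustive, and I would handle each by quoting the appropriate corollary.

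In the rational case, write $\gamma=\tfrac{p}{q}\pi$ with $p,q$ coprime integers; because $\sin\gamma\neq 0$ we have $q\geq 2$. Corollary~\ref{cor:gammaQ} then says that $X$ is periodic at both $a$ and $b$ at time $q\tau$ (the same time), and, in the sub-case where $q$ is even, that $X$ additionally admits perfect state transfer between $a$ and $b$ at time $\tfrac{q}{2}\tau$. Either way, one of the three stated alternatives holds. In the irrational case, write $\gamma=\lambda\pi$ with $\lambda$ irrational; Corollary~\ref{cor:pgst} directly yields pretty good state transfer between $a$ and $b$. Combining the two branches gives the theorem.

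I do not expect any genuine obstacle here, since all the analytic work — the ratio/periodicity bookkeeping through Theorem~\ref{thm:equiv_cond} and the Kronecker-approximation argument — has already been carried out in the corollaries. The only points that require a word of care are verifying that the rational/irrational dichotomy for $\gamma/\pi$ is genuinely exhaustive and that the degenerate possibility $\gamma\in\pi\ZZ$ (which would collapse the revival to mere periodicity with $\beta=0$) is ruled out by the hypothesis $\beta\neq 0$; both are immediate. If desired, one could also remark that every time produced in the argument is a finite integer multiple of $\tau$, so the conclusion is effective.
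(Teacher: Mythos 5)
Your proposal is correct and matches the paper's intent exactly: the theorem is stated there as a direct consequence of Proposition~\ref{prop:strongly_cospectral} together with Corollaries~\ref{cor:gammaQ} and~\ref{cor:pgst}, via the same rational/irrational dichotomy for $\gamma/\pi$ that you spell out. Your added remarks (that $\beta\neq 0$ rules out $\gamma\in\pi\ZZ$ and that the relevant times are integer multiples of $\tau$) are accurate and harmless.
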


\ignore{
{\color{orange} Question for $H(n,2)$:  All vertices in any graph $X$ in $H(n,2)$ are strongly copsectral.     
If FR occurs in $X$ then it has to be $(e^{\ii\zeta}\cos \gamma, \ii e^{\ii \zeta}\sin \gamma)$-revival 
between antipodal vertices.   Since $X$ is integral, it is periodic and PGST implies PST.
Does it imply $\gamma \in \pi\QQ$?  A positive answer may be useful in our FR paper on schemes.}
}


\section{Equitable Partition} \label{section:equitable}

A partition $\rho = \bigcup_{i} C_{i}$ of the vertex set of a graph $X$ is {\sl equitable} if all vertices in the $i$-th cell $C_i$ of $\rho$ have the same number of neighbours, $d_{ij}$, in the $j$-th cell $C_j$ of $\rho$, for all $i, j$.
The quotient graph $X/\rho$ has its vertex set $\rho$ and the edge joining $C_i$ to $C_j$ has weight $\sqrt{d_{ij}d_{ji}}$.   

\begin{theorem}[Theorem 2 of \cite{bfflott}]
Suppose a graph $X$ has an equitable partition $\rho$ containing singleton cells $\{a\}$ and $\{b\}$.
Then 
\begin{equation}
(U_X(t))_{a,b} = U_{X/\rho}(t)_{\{a\},\{b\}}, \qquad \text{for all $t$}.
\end{equation}
\end{theorem}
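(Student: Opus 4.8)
The plan is to exploit the standard intertwining relation between a graph and the quotient by an equitable partition, and then read off the $(a,b)$ entry using that $a$ and $b$ sit in singleton cells. Write $\rho = C_1 \cup \cdots \cup C_m$, and let $P$ be the $|V(X)| \times m$ \emph{normalized characteristic matrix} of $\rho$, defined by $P_{v,i} = 1/\sqrt{|C_i|}$ if $v \in C_i$ and $P_{v,i} = 0$ otherwise. Then the columns of $P$ are orthonormal, so $P^{T}P = I_m$, and its column space $\mathcal{C}$ is exactly the space of vectors that are constant on each cell.

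First I would establish the key identity
\[
A(X)\,P = P\,A(X/\rho).
\]
Fix $v \in C_j$. The $(v,i)$ entry of $A(X)P$ is $\sum_{w \in C_i} A(X)_{v,w}/\sqrt{|C_i|} = d_{ji}/\sqrt{|C_i|}$ by equitability, while the $(v,i)$ entry of $P\,A(X/\rho)$ is $A(X/\rho)_{j,i}/\sqrt{|C_j|}$. Hence the identity is equivalent to $A(X/\rho)_{j,i} = d_{ji}\sqrt{|C_j|/|C_i|}$. Counting the edges between $C_i$ and $C_j$ in two ways gives $|C_j|\,d_{ji} = |C_i|\,d_{ij}$, and therefore $d_{ji}\sqrt{|C_j|/|C_i|} = \sqrt{d_{ij}d_{ji}}$, which is precisely the weight the paper assigns to the edge $\{C_i,C_j\}$ of $X/\rho$; in particular this right-hand side is symmetric, which is exactly why one must use this normalized quotient rather than the raw matrix $(d_{ij})$.

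Iterating the identity yields $A(X)^{k}P = P\,A(X/\rho)^{k}$ for every $k \geq 0$, and hence, summing the exponential series, $e^{-\ii t A(X)}P = P\,e^{-\ii t A(X/\rho)}$, i.e.\ $U_X(t)\,P = P\,U_{X/\rho}(t)$ for all $t$. Multiplying on the left by $P^{T}$ and using $P^{T}P = I_m$ gives $P^{T}U_X(t)P = U_{X/\rho}(t)$. Now let $C_{i_a} = \{a\}$ and $C_{i_b} = \{b\}$ be the singleton cells; since $|C_{i_a}| = |C_{i_b}| = 1$, the corresponding columns of $P$ are $P\ee_{i_a} = \ee_a$ and $P\ee_{i_b} = \ee_b$. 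Therefore
\[
(U_X(t))_{a,b} = \ee_a^{T}U_X(t)\ee_b = \ee_{i_a}^{T}P^{T}U_X(t)P\,\ee_{i_b} = \ee_{i_a}^{T}U_{X/\rho}(t)\,\ee_{i_b} = \big(U_{X/\rho}(t)\big)_{\{a\},\{b\}},
\]
which is the claim.

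I do not anticipate a genuine obstacle: the content is the classical fact that $\mathcal{C} = \mathrm{col}(P)$ is $A(X)$-invariant, hence $U_X(t)$-invariant, together with the observation that singleton cells make $\ee_a,\ee_b$ literal columns of $P$. The only point that needs care is checking that the specific normalization in the paper's definition $A(X/\rho)_{C_i,C_j} = \sqrt{d_{ij}d_{ji}}$ makes $A(X)P = P\,A(X/\rho)$ hold exactly (and not merely up to a diagonal conjugation), and this is just the edge-counting identity $|C_i|\,d_{ij} = |C_j|\,d_{ji}$ recorded above.
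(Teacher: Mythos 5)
Your proof is correct: the normalized characteristic matrix $P$, the intertwining identity $A(X)P = P\,A(X/\rho)$ (which is exactly where the symmetrized weights $\sqrt{d_{ij}d_{ji}}$ and the edge count $|C_i|d_{ij}=|C_j|d_{ji}$ enter), exponentiation to $U_X(t)P = P\,U_{X/\rho}(t)$, and the observation that singleton cells make $\ee_a,\ee_b$ columns of $P$ together give the claimed entrywise equality. The paper itself states this result without proof, citing Theorem~2 of the quotient-graphs reference, and your argument is essentially the standard proof given there, so there is nothing further to reconcile.
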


\begin{corollary}\label{cor:quotient}
Suppose a graph $X$ has an equitable partition $\rho$ containing singleton cells $\{a\}$ and $\{b\}$.
There is fractional revival between $a$ and $b$ in $X$ if and only if there is fractional revival between $\{a\}$ and $\{b\}$ in the quotient $X/\rho$.
\end{corollary}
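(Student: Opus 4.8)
The plan is to derive this directly from the preceding Theorem (Theorem 2 of \cite{bfflott}), combined with the elementary fact that every column of a unitary matrix is a unit vector. That theorem says that whenever $\{a\}$ and $\{b\}$ are singleton cells of $\rho$, the $(a,b)$-entry of $U_X(t)$ equals the $(\{a\},\{b\})$-entry of $U_{X/\rho}(t)$ for all $t$; applying it with the pair $(\{a\},\{a\})$ likewise gives $(U_X(t))_{a,a}=(U_{X/\rho}(t))_{\{a\},\{a\}}$. So I already control the two entries of the relevant column that actually carry weight in a fractional revival.

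For the forward direction, I would start from $(\alpha,\beta)$-revival between $a$ and $b$ in $X$ at time $\tau$, i.e. $U_X(\tau)\ee_a=\alpha\ee_a+\beta\ee_b$ with $\beta\neq 0$ and $|\alpha|^2+|\beta|^2=1$. Reading off the $a$- and $b$-entries of this column and invoking the theorem yields $(U_{X/\rho}(\tau))_{\{a\},\{a\}}=\alpha$ and $(U_{X/\rho}(\tau))_{\{b\},\{a\}}=\beta$. Since $U_{X/\rho}(\tau)$ is unitary, its $\{a\}$-column has unit norm, and because $|\alpha|^2+|\beta|^2=1$ already accounts for the entire norm, every other entry of that column must vanish. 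Hence $U_{X/\rho}(\tau)\ee_{\{a\}}=\alpha\ee_{\{a\}}+\beta\ee_{\{b\}}$, which is exactly fractional revival between $\{a\}$ and $\{b\}$ in $X/\rho$ (the value $\beta\neq 0$ and the normalization are inherited unchanged).

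The converse is symmetric: from $U_{X/\rho}(\tau)\ee_{\{a\}}=\alpha\ee_{\{a\}}+\beta\ee_{\{b\}}$ the same two entry identities give $(U_X(\tau))_{a,a}=\alpha$ and $(U_X(\tau))_{b,a}=\beta$, and unitarity of $U_X(\tau)$ forces the remaining entries of the $a$-column to be zero, so $U_X(\tau)\ee_a=\alpha\ee_a+\beta\ee_b$.

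The only subtlety, and it is a mild one, is that the cited theorem controls only entries indexed by singleton cells, so one cannot transport the whole state vector through the quotient using that statement verbatim; the unit-norm-of-columns observation is what bridges this gap. I expect this to be the sole point requiring care. (Alternatively, one could invoke the intertwining relation $U_X(t)P=P\,U_{X/\rho}(t)$ for the normalized characteristic matrix $P$ of $\rho$: since $P$ has orthonormal columns and $\ee_a=P\ee_{\{a\}}$, $\ee_b=P\ee_{\{b\}}$ for the singleton cells, the identity $U_X(\tau)\ee_a=P\,U_{X/\rho}(\tau)\ee_{\{a\}}$ transfers the fractional revival in either direction at once; but the norm argument above is shorter and stays within the stated results.)
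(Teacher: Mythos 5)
Your proof is correct and follows exactly the route the paper intends: the corollary is stated as an immediate consequence of the quoted theorem on singleton-cell entries, and your observation that unitarity of $U_X(\tau)$ and $U_{X/\rho}(\tau)$ forces the rest of the relevant column to vanish (since $|\alpha|^2+|\beta|^2=1$ already exhausts the unit norm) is precisely the standard gap-filler, with the intertwining relation $U_X(t)P = P\,U_{X/\rho}(t)$ as an equivalent alternative. No issues.
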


\begin{example}(Double Cones)\\
Let $Y$ be a $k$-regular graph on $n$ vertices.   Let $X$ be the join of $\ov{K_2}$ and $Y$, with the two vertices in $\ov{K_2}$ labeled as $a$ and $b$.
Then $\rho = \{ \{a\}, V(Y), \{b\}\}$ is an equitable partition of $X$ and the quotient graph $X/\rho$ has adjacency matrix
\begin{equation}
A(X/\rho) = \begin{pmatrix} 0 &\sqrt{n} & 0\\ \sqrt{n}& k &\sqrt{n}\\0&\sqrt{n}&0\end{pmatrix}.
\end{equation}
The eigenvalues of $A(X/\rho)$ are 
 $\theta_{\pm} = \frac{1}{2}(k\pm \sqrt{k^2+8n}) \in \Phi_{a,b}^+$ and $0\in \Phi_{a,b}^-$.
 Let $\tau = \frac{2\pi}{\sqrt{k^2+8n}}$ then
 \begin{equation}
\tau( \theta_+- \theta_-) =  2\pi \quad \text{and} \quad
\tau( \theta_+-0 ) = \frac{k+\sqrt{k^2+8n}}{\sqrt{k^2+8n}}\pi.
\end{equation}
It follows from Theorem \ref{thm:equiv_cond} that the quotient graph $X/\rho$ has $(e^{\ii \zeta} \cos \gamma, \ii e^{\ii \zeta} \sin \gamma)$-revival between cells $\{a\}$ and $\{b\}$ at time $\tau$ where 
\begin{equation}
\gamma =- \frac{k+\sqrt{k^2+8n}}{2\sqrt{k^2+8n}} \pi
\qquad \text{and} \qquad \zeta = \gamma.
\end{equation}
By Corollary \ref{cor:quotient}, the double cone $X$ admits fractional revival between $a$ and $b$.

In particular, the cocktail party graph $\ov{nK_2}$ admits fractional revival but not perfect state transfer between antipodal vertices, for all odd $n\geq 3$.   For even $n\geq 4$,  fractional revival occurs at time $\frac{\pi}{n}$ and perfect state transfer occurs at time $\frac{\pi}{2}$ in $\ov{nK_2}$ .
See Figure \ref{fig:cocktail_party}.
\end{example}

\begin{figure}[H]
\begin{center}
\begin{tikzpicture}[
    scale=0.5,
    stone/.style={},
    black-stone/.style={black!80},
    black-highlight/.style={outer color=black!80, inner color=black!30},
    black-number/.style={white},
    white-stone/.style={white!70!black},
    white-highlight/.style={outer color=white!70!black, inner color=white},
    white-number/.style={black}]


\tikzstyle{every node}=[draw, thick, shape=circle, circular drop shadow, fill={white}];
\path (-2,+1) node (a1) [scale=0.8] {};
\path (+2,-1) node (a2) [scale=0.8] {};
\tikzstyle{every node}=[draw, thick, shape=circle, circular drop shadow, fill={blue!20}];
\path (0,+2.5) node (b1) [scale=0.8] {};
\path (+2,+1) node (c1) [scale=0.8] {};
\path (0,-2.5) node (b2) [scale=0.8] {};
\path (-2,-1) node (c2) [scale=0.8] {};
\draw[thick]
    (a1) -- (b1) -- (c1) -- (a2) -- (b2) -- (c2) -- (a1)
    (a1) -- (c1) -- (b2) -- (a1)
    (b1) -- (a2) -- (c2) -- (b1);
\end{tikzpicture}
\quad \quad \quad
\begin{tikzpicture}[
    scale=0.5,
    stone/.style={},
    black-stone/.style={black!80},
    black-highlight/.style={outer color=black!80, inner color=black!30},
    black-number/.style={white},
    white-stone/.style={white!70!black},
    white-highlight/.style={outer color=white!70!black, inner color=white},
    white-number/.style={black}]


\tikzstyle{every node}=[draw, thick, shape=circle, circular drop shadow, fill={white}];
\path (-2.5,+1) node (a1) [scale=0.8] {};
\path (+2.5,-1) node (a2) [scale=0.8] {};
\tikzstyle{every node}=[draw, thick, shape=circle, circular drop shadow, fill={blue!20}];
\path (-1,+2.5) node (b1) [scale=0.8] {};
\path (+1,+2.5) node (c1) [scale=0.8] {};
\path (+2.5,+1) node (d1) [scale=0.8] {};
\path (+1,-2.5) node (b2) [scale=0.8] {};
\path (-1,-2.5) node (c2) [scale=0.8] {};
\path (-2.5,-1) node (d2) [scale=0.8] {};
\draw[thick]
    (a1) -- (b1) -- (c1) -- (d1) -- (a2) -- (b2) -- (c2) -- (d2) -- (a1)
    (a1) -- (c1) -- (a2) -- (c2) -- (a1)
    (b1) -- (d1) -- (b2) -- (d2) -- (b1)
    (a1) -- (d1) -- (c2) -- (b1) -- (a2) -- (d2) -- (c1) -- (b2) -- (a1);
\end{tikzpicture}
\caption{The double cones $\ov{nK_{2}}$ have fractional revival between antipodal vertices marked white.
Left: $\ov{3K_{2}}$ has fractional revival but no perfect state transfer. 
Right: $\ov{4K_{2}}$ has both fractional revival and perfect state transfer.}
\label{fig:cocktail_party}
\end{center}
\end{figure}
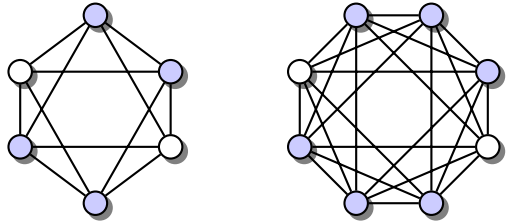

We use $Aut_X(a)$ to denote the group of automorphisms of $X$ that fix $a$. 
Let $\Delta_a$ be the coarsest equitable refinement of the partition $\{\{a\}, V(X)\backslash \{a\}\}$.
Then the orbits of $Aut_X(a)$ is an equitable refinement of $\Delta_a$.
Just as when perfect state transfer occurs between $a$ and $b$, see \cite{g12b}, it follows from the following proposition that if fractional revival occurs between
$a$ and $b$ then $\Delta_a=\Delta_b$.
\begin{proposition}\label{prop:aut}
Suppose fractional revival occurs between $a$ and $b$ in a graph $X$,   then $Aut_X(a) = Aut_X(b)$.
\begin{proof}
Suppose $(\alpha,\beta)$-revival occurs from $a$ to $b$ at time $\tau$.
Let $P$ be the permutation matrix of  an automorphism in $Aut_X(a)$ that maps $b$ to $c$.   We have
\begin{equation}
\ee_c^* U(\tau) \ee_a = \ee_c^* U(\tau) P \ee_a = \ee_b^* U(\tau) \ee_a =\beta.
\end{equation}
We conclude that $\ee_c^* \ee_b \neq 0$ and $b=c$.
\end{proof}
\end{proposition}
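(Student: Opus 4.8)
The plan is to adapt the standard argument used for perfect state transfer (Godsil~\cite{g12b}): a graph automorphism fixing $a$ commutes with the quantum walk, and because the coefficient of $\ee_b$ in $U(\tau)\ee_a$ is nonzero, such an automorphism is forced to fix $b$ as well. It suffices to establish the inclusion $Aut_X(a) \subseteq Aut_X(b)$; the reverse inclusion then follows by running the same argument with the roles of $a$ and $b$ interchanged, which is legitimate because Proposition~\ref{prop:sym} guarantees that fractional revival also occurs from $b$ to $a$ with the same nonzero second coefficient $\beta$.

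Concretely, I would fix $\phi \in Aut_X(a)$ with permutation matrix $P$ and use two standard facts: since $\phi$ is an automorphism, $P$ commutes with $A(X)$ and hence with $U(\tau) = \exp(-\ii\tau A(X))$; and since $\phi(a) = a$, we have $P\ee_a = \ee_a$. The crux is then to compute $U(\tau)\ee_a$ in two ways. On one hand it equals $\alpha\ee_a + \beta\ee_b$ by hypothesis; on the other hand,
\begin{equation}
U(\tau)\ee_a = U(\tau)P\ee_a = P\,U(\tau)\ee_a = P(\alpha\ee_a + \beta\ee_b) = \alpha\ee_a + \beta\ee_{\phi(b)}.
\end{equation}
Comparing the two expressions gives $\beta(\ee_b - \ee_{\phi(b)}) = 0$, and since $\beta \neq 0$ and the $\ee_v$ form the standard basis, this forces $\phi(b) = b$, i.e. $\phi \in Aut_X(b)$.

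I do not expect any real obstacle here; the argument is short and essentially formal. The only points needing a little care are the conventions for the action of permutation matrices (that the matrix of an automorphism commutes with $A(X)$, and that fixing $a$ translates to $P\ee_a = \ee_a$) and the explicit use of $\beta \neq 0$, which is precisely the feature distinguishing fractional revival from plain periodicity and is what makes the conclusion possible at all. Once Proposition~\ref{prop:aut} is in place, one obtains $\Delta_a = \Delta_b$ for the coarsest equitable refinements exactly as in the perfect state transfer setting discussed above.
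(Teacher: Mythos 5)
Your argument is correct and is essentially the paper's own proof: both exploit that the permutation matrix of an automorphism fixing $a$ commutes with $U(\tau)$ and fixes $\ee_a$, so the nonzero coefficient $\beta$ forces the automorphism to fix $b$, with the reverse inclusion supplied by Proposition~\ref{prop:sym}. No gaps.
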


Consequently, if  fractional revival occurs in a cycle then the cycle has even length and fractional revival occurs between antipodal vertices.


\section{Bipartite Graphs and Even Cycles} \label{section:bipartite}

Let $X$ be a connected bipartite graph with adjacency matrix
\begin{equation}
A = 
\begin{pmatrix}
0 & B\\ B^T & 0
\end{pmatrix}
\end{equation}
Then the transition matrix is
\begin{equation}
U(t) = \sum_{k\geq 0} \frac{(-1)^k t^{2k}}{(2k)!} \begin{pmatrix} (BB^T)^k &0\\0 & (B^TB)^k\end{pmatrix} -
\ii \sum_{k\geq 0} \frac{(-1)^k t^{2k+1}}{(2k+1)!} \begin{pmatrix} 0&(BB^T)^kB\\ (B^TB)^kB^T&0\end{pmatrix}.
\end{equation}

\begin{figure}[t]
\begin{center}
\begin{tikzpicture}[
    scale=0.5,
    stone/.style={},
    black-stone/.style={black!80},
    black-highlight/.style={outer color=black!80, inner color=black!30},
    black-number/.style={white},
    white-stone/.style={white!70!black},
    white-highlight/.style={outer color=white!70!black, inner color=white},
    white-number/.style={black}]


\tikzstyle{every node}=[draw, thick, shape=circle, circular drop shadow, fill={white}];
\path (0,+2) node (a) [scale=0.8] {};
\path (0,-2) node (c) [scale=0.8] {};
\tikzstyle{every node}=[draw, thick, shape=circle, circular drop shadow, fill={blue!20}];
\path (+2,0) node (b) [scale=0.8] {};
\path (-2,0) node (d) [scale=0.8] {};
\draw[thick]
    (a) -- (b) -- (c) -- (d) -- (a);
\end{tikzpicture}
\quad \quad \quad
\begin{tikzpicture}[
    scale=0.5,
    stone/.style={},
    black-stone/.style={black!80},
    black-highlight/.style={outer color=black!80, inner color=black!30},
    black-number/.style={white},
    white-stone/.style={white!70!black},
    white-highlight/.style={outer color=white!70!black, inner color=white},
    white-number/.style={black}]


\tikzstyle{every node}=[draw, thick, shape=circle, circular drop shadow, fill={white}];
\path (-2,+1) node (a) [scale=0.8] {};
\path (+2,-1) node (d) [scale=0.8] {};
\tikzstyle{every node}=[draw, thick, shape=circle, circular drop shadow, fill={blue!20}];
\path (0,+2) node (b) [scale=0.8] {};
\path (+2,+1) node (c) [scale=0.8] {};
\path (0,-2) node (e) [scale=0.8] {};
\path (-2,-1) node (f) [scale=0.8] {};
\draw[thick]
    (a) -- (b) -- (c) -- (d) -- (e) -- (f) -- (a);
\end{tikzpicture}
\caption{The only cycles with fractional revival: $C_{4}$ and $C_{6}$.
Fractional revival occurs between antipodal vertices marked white.}
\label{fig:cycles}
\end{center}
\end{figure}
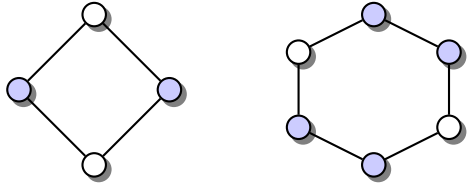

\begin{theorem}\label{thm:bipartite}
Suppose $(\alpha, \beta)$-revival occurs between $a$ and $b$ in a bipartite graph $X$ at time $\tau$.
If $a$ and $b$ belong to different bipartition of $X$ then $a$ and $b$ are strongly cospectral.   If $a$ and $b$ belong to the same bipartition then $X$ is periodic at both vertices at time $2\tau$. 
\begin{proof}
Observe that  if $a$ and $b$ belong to different bipartitions of $X$ then $\alpha=\cos \gamma$ and $\beta=\ii \sin \gamma$ for some $\gamma \in \RR$.   We conclude from Proposition~\ref{prop:strongly_cospectral} that $a$ and $b$ are strongly cospectral.

If $a$ and $b$ belong to the same bipartition of $X$ then either $\alpha, \beta \in \RR$ or $\alpha, \beta \in \ii \RR$.  
From Proposition~\ref{prop:sym}, we see that $U(\tau)_{b,b} = -\alpha$.
We reorder the vertices so that the first two rows and columns of $A(X)$ are indexed by $a$ and $b$.
Then 
\begin{equation}
U(\tau) = \begin{pmatrix} M_1 & 0^T\\0 & M_2\end{pmatrix}
\end{equation}
where
\begin{equation}
M_1 = \begin{pmatrix} \alpha & \beta \\ \beta & -\alpha \end{pmatrix}
\end{equation}
and $M_2$ is a symmetric unitary matrix.
Since $M_1^2 = \pm I$,  we conclude that $X$ is periodic at $a$ and $b$ at time $2\tau$.
\end{proof}
\end{theorem}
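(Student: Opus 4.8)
The plan is to split into the two cases of the statement according to whether $a$ and $b$ lie on the same side of the bipartition or on opposite sides. Note first that this dichotomy is well posed: since $\beta \neq 0$, the vertices $a$ and $b$ lie in a common connected component, and within a connected bipartite graph the bipartition is unique. The one structural input I will use is the block expansion of $U(t)$ for bipartite graphs displayed just above the theorem: the blocks of $U(\tau)$ indexed by pairs of vertices on the \emph{same} side come from the even (cosine-type) series and are therefore real, whereas the blocks indexed by pairs on \emph{opposite} sides come from the odd series carrying the factor $-\ii$ and are therefore purely imaginary.

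In the first case, suppose $a$ and $b$ are on opposite sides. Then $\alpha = U(\tau)_{a,a}$ is real and $\beta = U(\tau)_{a,b}$ is purely imaginary, so using $|\alpha|^2 + |\beta|^2 = 1$ we may write $\alpha = \cos\gamma$ and $\beta = \ii\sin\gamma$ for a suitable $\gamma \in \RR$. This is precisely the normal form of Proposition~\ref{prop:strongly_cospectral} with $\zeta = 0$, and since $\beta \neq 0$ that proposition (whose proof upgrades cospectrality to strong cospectrality exactly when $\beta \neq 0$) yields that $a$ and $b$ are strongly cospectral.

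In the second case, suppose $a$ and $b$ are on the same side; then $\alpha$ and $\beta$ are real. Proposition~\ref{prop:sym} gives $U(\tau)_{b,b} = -\alpha$, and by symmetry $U(\tau)_{b,a} = \beta$; since the $b$-th row of the unitary $U(\tau)$ has unit norm and $|\alpha|^2 + |\beta|^2 = 1$ already, the remaining entries of that row vanish, so $U(\tau)\ee_b = \beta\ee_a - \alpha\ee_b$. Hence $\spn\{\ee_a,\ee_b\}$ is $U(\tau)$-invariant, and, $U(\tau)$ being unitary, so is its orthogonal complement; after reordering the vertices,
\begin{equation}
U(\tau) = \begin{pmatrix} M_1 & 0 \\ 0 & M_2 \end{pmatrix}, \qquad M_1 = \begin{pmatrix} \alpha & \beta \\ \beta & -\alpha \end{pmatrix},
\end{equation}
with $M_2$ symmetric and unitary. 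Then $M_1^2 = (\alpha^2+\beta^2)I = I$, so $U(2\tau)\ee_a = \ee_a$ and $U(2\tau)\ee_b = \ee_b$, i.e. $X$ is periodic at both $a$ and $b$ at time $2\tau$. (If one allows a priori the alternative $\alpha,\beta \in \ii\RR$, the same line gives $M_1^2 = -I$ and the same conclusion.)

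I do not anticipate a real obstacle: the only structural ingredient, the bipartite expansion of $U(t)$, is already in hand, and everything else is bookkeeping. The step requiring the most care is verifying that $U(\tau)$ is genuinely block-diagonal with the claimed $2\times 2$ block — this is where unitarity (forcing the rest of row $b$ to vanish) together with Proposition~\ref{prop:sym} is used — and keeping straight which series, hence real or imaginary, governs each entry of $U(\tau)$.
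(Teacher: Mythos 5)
Your proof is correct and follows essentially the same route as the paper: the bipartite block expansion of $U(\tau)$ forces $\alpha=\cos\gamma$, $\beta=\ii\sin\gamma$ in the opposite-sides case (so Proposition~\ref{prop:strongly_cospectral} gives strong cospectrality), and in the same-side case Proposition~\ref{prop:sym} plus unitarity yields the block form with $M_1^2=\pm I$, hence periodicity at time $2\tau$. Your explicit row-norm argument for why the rest of row $b$ vanishes just fills in a detail the paper leaves implicit.
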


In the latter case, the following ratio condition \cite{g12b} holds for the eigenvalues in $\Phi_a$ or in $\Phi_b$.
\begin{theorem} \label{thm:ratio_periodic}
If $X$ is periodic at $a$, then
\begin{equation}
\frac{\theta_i-\theta_j}{\theta_r-\theta_s} \in \QQ
\end{equation}
for any $\theta_i, \theta_j, \theta_r, \theta_s \in \Phi_a$ with $\theta_r\neq \theta_s$.
\end{theorem}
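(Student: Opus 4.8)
The plan is to turn periodicity into a divisibility statement about the eigenvalues in $\Phi_a$ and then cancel a common factor. First I would invoke the definition: periodicity at $a$ at some time $\tau \in \RR \setminus \{0\}$ means $U(\tau)\ee_a = \gamma \ee_a$ for a scalar $\gamma$, which is unimodular since $U(\tau)$ is unitary. Expanding $\ee_a = \sum_r E_r\ee_a$ and using $U(\tau) = \sum_r e^{-\ii\tau\theta_r}E_r$, the identity $U(\tau)\ee_a = \gamma\ee_a$ is equivalent, by orthogonality of the $E_r$, to
\[
e^{-\ii\tau\theta_r} E_r\ee_a = \gamma E_r\ee_a \qquad \text{for every } r.
\]
By definition $E_r\ee_a \neq 0$ precisely when $\theta_r \in \Phi_a$, so this forces $e^{-\ii\tau\theta_r} = \gamma$ for all $\theta_r \in \Phi_a$; in particular the \emph{same} constant $\gamma$ works for every eigenvalue in the support.

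Next I would take differences: since $e^{-\ii\tau\theta_r} = \gamma = e^{-\ii\tau\theta_s}$ for all $\theta_r,\theta_s \in \Phi_a$, we get $e^{-\ii\tau(\theta_r - \theta_s)} = 1$, hence $\tau(\theta_r - \theta_s) \in 2\pi\ZZ$. Write $\tau(\theta_r - \theta_s) = 2\pi\, m_{rs}$ with $m_{rs} \in \ZZ$; since $\tau \neq 0$ and $\theta_r \neq \theta_s$ by hypothesis, $m_{rs} \neq 0$. Then for any $\theta_i, \theta_j, \theta_r, \theta_s \in \Phi_a$ with $\theta_r \neq \theta_s$,
\[
\frac{\theta_i - \theta_j}{\theta_r - \theta_s} = \frac{2\pi m_{ij}/\tau}{2\pi m_{rs}/\tau} = \frac{m_{ij}}{m_{rs}} \in \QQ,
\]
which is exactly the ratio condition.

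There is no genuine obstacle here: this is the elementary direction of Godsil's periodicity analysis, and the deeper content (that these ratio conditions additionally pin down the arithmetic form of the eigenvalues) is already quoted as Lemma~\ref{lem:RatioConditionQuadraticIntegers}. The only two points that merit a line of care are that $\gamma$ is a single constant independent of $r$ (immediate from the eigenprojection identity above, together with the Perron--Frobenius fact that $\theta_0 \in \Phi_a$ if one wants a canonical choice), and that the denominator $m_{rs}$ is nonzero, which is precisely why the statement excludes $\theta_r = \theta_s$.
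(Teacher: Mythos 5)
Your argument is correct and complete: periodicity forces $e^{-\ii\tau\theta_r}$ to equal a single unimodular constant on the eigenvalue support, so all differences $\tau(\theta_r-\theta_s)$ lie in $2\pi\ZZ$ and the ratios are quotients of integers, with the exclusion $\theta_r\neq\theta_s$ guaranteeing a nonzero denominator. The paper states this theorem without proof, citing Godsil's ratio condition \cite{g12b}, and your proof is exactly the standard argument for this (easy) direction of that result, so there is nothing further to add.
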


\begin{example} (Cycles)\\
It is known that $C_4$ has perfect state transfer at time $\frac{\pi}{2}$.   
In $C_6$, antipodal vertices $a$ and $b$ are strongly cospectral, and
\begin{equation}
\Phi_{a,b}^+=\{2,-1\}
\quad \text{and} \quad \Phi_{a,b}^- = \{1,-2\}.
\end{equation}
By Theorem~\ref{thm:equiv_cond},  $(-\frac{1}{2}, \frac{\sqrt{3}}{2} \ii)$-revival occurs between $a$ and $b$ in $C_6$ at time $\frac{2\pi}{3}$.
\end{example}

We show that no other cycle admits fractional revival using the following number theory result \cite{berger}.
Given a non-zero rational number $\mu$, we use $N(\mu)$ to denote 
$q$ if $\mu=\frac{p}{q}$ where $p$ and $q$ are coprime and $q>0$.
\begin{theorem}\label{thm:cosine}
Suppose $\mu_1$ and $\mu_2$ are rational numbers such that  $\mu_1\pm \mu_2$ are not integers.
Then  the set $\{1, \cos(\mu_1 \pi), \cos(\mu_2 \pi)\}$ is linearly independent over $\QQ$
if and only if $N(\mu_1), N(\mu_2) \geq 4$ and $(N(\mu_1), N(\mu_2)) \neq (5,5)$.
\end{theorem}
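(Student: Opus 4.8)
The plan is to prove the two implications separately, disposing of the ``only if'' direction by inspection and reserving the real work for the converse. For ``only if'' I would argue contrapositively: if $N(\mu_i)\le 3$ for some $i$, then $\cos(\mu_i\pi)\in\{0,\pm\tfrac12,\pm1\}\subset\QQ$, so $\{1,\cos(\mu_i\pi)\}$ is already $\QQ$-linearly dependent; and if $N(\mu_1)=N(\mu_2)=5$, then $\cos(\mu_1\pi),\cos(\mu_2\pi)\in\{\tfrac{\pm1\pm\sqrt5}{4}\}$, so the three numbers $1,\cos(\mu_1\pi),\cos(\mu_2\pi)$ all lie in the two-dimensional $\QQ$-space $\QQ(\sqrt5)$ and are dependent. (This direction does not use $\mu_1\pm\mu_2\notin\ZZ$.)

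For the converse, assume $N(\mu_1),N(\mu_2)\ge 4$, $(N(\mu_1),N(\mu_2))\ne(5,5)$ and $\mu_1\pm\mu_2\notin\ZZ$, and suppose toward a contradiction that $a+b\cos(\mu_1\pi)+c\cos(\mu_2\pi)=0$ with $a,b,c\in\QQ$ not all zero. By Niven's theorem, $N(\mu_i)\ge 4$ forces $\cos(\mu_i\pi)\notin\QQ$, so $b,c\ne 0$ (if one of them vanishes so does the other, and then $a=0$ as well); rescale so that $b=1$. Then $\cos(\mu_1\pi)=-a-c\cos(\mu_2\pi)$ with $c\ne 0$, hence $K:=\QQ(\cos(\mu_1\pi))=\QQ(\cos(\mu_2\pi))$ and $\cos(\mu_1\pi),\cos(\mu_2\pi)$ have a common degree $d:=[K:\QQ]$ over $\QQ$. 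I would then use the quadratic sharpening of Niven's theorem: $\cos(\mu\pi)$ has degree exactly $2$ over $\QQ$ iff $N(\mu)\in\{4,5,6\}$, with $K$ then equal to $\QQ(\sqrt2),\QQ(\sqrt5),\QQ(\sqrt3)$ respectively (this follows from $[\QQ(\cos(\mu\pi)):\QQ]=\varphi(m)/2$, where $m$ is the order of $e^{\ii\mu\pi}$, and $\varphi(m)=4$ iff $m\in\{5,8,10,12\}$), whereas $N(\mu)\ge 7$ forces degree at least $3$.

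Now I split on $d$. If $d=2$, then since $\QQ(\sqrt2),\QQ(\sqrt3),\QQ(\sqrt5)$ are pairwise distinct, $K$ forces $N(\mu_1)=N(\mu_2)$, with both in $\{4\}$, both in $\{5\}$, or both in $\{6\}$; the value $(5,5)$ is excluded by hypothesis, and for $(4,4)$ -- writing $\mu_j=p_j/4$ with $p_j$ odd -- exactly one of $p_1+p_2$ and $p_1-p_2$ is divisible by $4$, so one of $\mu_1\pm\mu_2$ is an integer, contradicting the hypothesis (the case $(6,6)$ is identical, with $4$ replaced by $6$ and $p_j$ coprime to $6$). If $d\ge 3$, I would multiply the relation by $2$ and clear denominators to obtain a vanishing integer combination $2a\cdot 1+b(\eta_1+\eta_1^{-1})+c(\eta_2+\eta_2^{-1})=0$ of the roots of unity $1,\eta_1^{\pm1},\eta_2^{\pm1}$, where $\eta_j=e^{\ii\mu_j\pi}$; this is supported on at most five distinct roots of unity. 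I would then invoke the Conway--Jones classification of vanishing sums of roots of unity: every minimal such relation of length at most five is, up to rotation by a root of unity, one of $1+\zeta_p+\dots+\zeta_p^{p-1}=0$ for $p\in\{2,3,5\}$. The key point is that the minimal piece carrying the summand $1$ must be a full set $\{1,\zeta_p,\dots,\zeta_p^{p-1}\}$ of $p$-th roots of unity, and this set must sit inside $\{1,\eta_1,\eta_1^{-1},\eta_2,\eta_2^{-1}\}$. A short case analysis then shows that either $p\in\{2,3\}$ and some $N(\mu_j)\in\{1,2,3\}$ (contradicting $N(\mu_j)\ge 4$), or a forced coincidence among $\eta_1^{\pm1},\eta_2^{\pm1}$ gives $\mu_1\pm\mu_2\in\ZZ$ (contradicting the hypothesis), or $p=5$ and $\{\eta_1,\eta_1^{-1},\eta_2,\eta_2^{-1}\}=\{\zeta_5,\zeta_5^2,\zeta_5^3,\zeta_5^4\}$, forcing $N(\mu_1)=N(\mu_2)=5$ and hence $d=2$, contradicting $d\ge 3$. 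In every case there is a contradiction, so no nontrivial relation exists.

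The step I expect to be the main obstacle is this last one -- carrying out the Conway--Jones case analysis cleanly, in particular handling the degenerate configurations in which some of $1,\eta_1^{\pm1},\eta_2^{\pm1}$ coincide, and verifying that the minimal pieces can actually be assembled with the right integer coefficients (a coefficient vector of the shape $(A,B,B,C,C)$ on $1,\eta_1,\eta_1^{-1},\eta_2,\eta_2^{-1}$ with $B,C\ne 0$). If one prefers to avoid this, Theorem~\ref{thm:cosine} is a purely number-theoretic fact about cyclotomic numbers and one may just cite \cite{berger}; but the argument above is the natural self-contained proof.
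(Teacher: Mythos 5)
The paper does not actually prove this statement: it is quoted as a known number-theoretic result and attributed to Berger \cite{berger}, so your fallback option (``just cite \cite{berger}'') is exactly what the paper does, and there is no internal proof to compare against. Judged as a self-contained argument, your outline is on the right track and its completed parts are correct: the ``only if'' direction by inspection of $N\le 3$ and of the $(5,5)$ case inside $\QQ(\sqrt5)$ is fine, and the degree-two case is handled completely and correctly (the field $\QQ(\cos\mu\pi)$ determines $N\in\{4,5,6\}$, $(5,5)$ is excluded by hypothesis, and $(4,4)$, $(6,6)$ force $\mu_1\pm\mu_2\in\ZZ$). Reducing the degree-$\ge 3$ case to vanishing sums of roots of unity and invoking the Conway--Jones classification is also the standard route; indeed the whole theorem is essentially a corollary of Conway--Jones' classification of rational relations among a bounded number of cosines of rational angles, which is presumably what \cite{berger} refines.

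The genuine gap is the one you flag yourself: the Conway--Jones case analysis is only sketched, and the sketch as written is not quite right in two places. First, the classification of minimal vanishing sums applies to sums with \emph{positive} coefficients; since $2a$, $b$, $c$ are signed, you must first replace $n\zeta$ ($n<0$) by $|n|(-\zeta)$, after which the support lies in the ten roots $\pm1,\pm\eta_1^{\pm1},\pm\eta_2^{\pm1}$, not the five you name, and the deduction ``some $N(\mu_j)\in\{1,2,3\}$ or $N(\mu_1)=N(\mu_2)=5$'' has to be rechecked against that larger set (it does go through, since a root of unity and its negative have the same $N$ up to the values $1,2,3,6$ bookkeeping, but this needs to be written out). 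Second, the phrase ``the minimal piece carrying the summand $1$'' silently assumes $a\ne 0$; the case $a=0$ (i.e.\ $\cos(\mu_1\pi)/\cos(\mu_2\pi)\in\QQ$) must be treated separately, where the pieces have length $2$ or $3$ and one uses $\mu_1\pm\mu_2\notin\ZZ$ and $N\ge4$ to rule them out. You also need, as part of the invoked classification, the fact that there is no minimal vanishing sum of length four, and you must verify that the decomposition into minimal pieces is compatible with the coefficient pattern $(2a,b,b,c,c)$. None of these obstacles is fatal, but completing them honestly amounts to reproving the relevant case of Conway--Jones/Berger, so as submitted the degree-$\ge3$ case is a sketch rather than a proof; citing \cite{berger}, as the paper does, closes it.
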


\begin{theorem}
Fractional revival occurs in a cycle if and only if it has four or six vertices.
\begin{proof}
From Proposition \ref{prop:aut}, it is sufficient to consider only the antipodal vertices in even cycles.

Suppose $a$ and $b$ are antipodal vertices in an even cycle $C_n$.
Then 
\begin{equation}
\Phi_{a,b}^+ = \{ 2\cos (\frac{2r\pi}{n}) \ : \ \text{$r$ is even}\} 
\quad \text{and} \quad
\Phi_{a,b}^-= \{ 2\cos (\frac{2r\pi}{n}) \ : \ \text{$r$ is odd}\}.
\end{equation}

If $n = 2\mod 4$, then $a$ and $b$ are in different bipartitions of $C_n$.  
When $n \neq 6, 10$, we have $N(\frac{4}{n})=N(\frac{8}{n}) \geq 7$, and by Theorem \ref{thm:cosine}, the set $\{1, \cos(\frac{4}{n}\pi), \cos(\frac{8}{n} \pi)\}$ is linearly independent over $\QQ$.
Hence $\frac{2-2\cos(4\pi/n)}{2-2\cos(8\pi/n)} \not \in \QQ$.  By Corollary~\ref{cor:ratio}, there is no fractional revival between $a$ and $b$.

In $C_{10}$, the eigenvalues  $2$ and $ \frac{-1\pm\sqrt{5}}{2}$ belong to $\Phi_{a,b}^+$.   By Corollary~\ref{cor:ratio}, the occurrence of fractional revival between $a$ and $b$ implies
\begin{equation*}
\frac{2- (\frac{-1+\sqrt{5}}{2})}{(\frac{-1+\sqrt{5}}{2}) - (\frac{-1-\sqrt{5}}{2})} \in \QQ,
\end{equation*}
or  $\sqrt{5} \in \QQ$.

If $n = 0 \mod 4$, then $a$ and $b$ are in same bipartition of $C_n$.   For $n \geq 16$,  Theorem~\ref{thm:cosine} implies the set $\{2, 2\cos(\frac{2}{n}\pi), 2\cos(\frac{4}{n} \pi)\}$ is linearly independent over 
$\QQ$.  Therefore the ratio condition in Theorem~\ref{thm:ratio_periodic} cannot hold and $C_n$ is not periodic.   Hence $C_n$ has no fractional revival for $n \geq 16$.

When $n=8$ or $12$,  the ratio $\frac{2-2\cos \frac{2\pi}{n}}{2-(-2)} \not \in \QQ$.   We conclude that  $C_8$ and $C_{12}$ are not periodic and they cannot have fractional revival.
\end{proof}
\end{theorem}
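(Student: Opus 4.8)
The plan is to reduce the problem to the case of antipodal vertices in even cycles (using Proposition~\ref{prop:aut}), and then to separate into the two cases $n \equiv 2 \pmod 4$ and $n \equiv 0 \pmod 4$ according to whether the antipodal pair lies in different bipartition classes or the same one. In either case the eigenvalues of $C_n$ are $2\cos(2r\pi/n)$ for $r = 0, \dots, n-1$, and I would compute $\Phi_{a,b}^+$ and $\Phi_{a,b}^-$ explicitly: the even-indexed cosines go to one set and the odd-indexed ones to the other. The goal is then to show that, for $n$ large, the relevant ratio condition (Corollary~\ref{cor:ratio} in the strongly cospectral case, Theorem~\ref{thm:ratio_periodic} in the periodic case) fails, so no fractional revival can occur; the small cases are handled by hand or have already been recorded in the Example.

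First, for $n \equiv 2 \pmod 4$, vertices $a$ and $b$ are strongly cospectral, so fractional revival would force $\frac{\theta_i - \theta_j}{\theta_r - \theta_s} \in \QQ$ for all eigenvalues in $\Phi_{a,b}^+$. I would pick three eigenvalues of the form $2, 2\cos(4\pi/n), 2\cos(8\pi/n) \in \Phi_{a,b}^+$ and invoke the number-theoretic Theorem~\ref{thm:cosine}: since $N(4/n) = N(8/n) = n/2 \geq 7$ once $n \geq 14$ (and $n \ne 10$), the set $\{1, \cos(4\pi/n), \cos(8\pi/n)\}$ is $\QQ$-linearly independent, hence $\frac{2 - 2\cos(4\pi/n)}{2 - 2\cos(8\pi/n)} \notin \QQ$, contradiction. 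The residual value $n = 10$ needs a separate, easy argument: $2, \tfrac{-1+\sqrt5}{2}, \tfrac{-1-\sqrt5}{2}$ all lie in $\Phi_{a,b}^+$, and the ratio condition would give $\sqrt5 \in \QQ$. This disposes of all $n \equiv 2 \pmod 4$ except $n = 6$, which does admit fractional revival.

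Next, for $n \equiv 0 \pmod 4$, vertices $a$ and $b$ lie in the same bipartition class, so by Theorem~\ref{thm:bipartite} fractional revival forces $C_n$ to be periodic at $a$, and Theorem~\ref{thm:ratio_periodic} then applies to all of $\Phi_a$. For $n \geq 16$ I would use $2, 2\cos(2\pi/n), 2\cos(4\pi/n) \in \Phi_a$ together with Theorem~\ref{thm:cosine} (now $N(2/n) = N(4/n) = n/2 \geq 8$) to violate the ratio condition. For the two remaining cases $n = 8$ and $n = 12$, note that $-2 \in \Phi_a$ and $\frac{2 - 2\cos(2\pi/n)}{2 - (-2)} \notin \QQ$ (for $n = 8$ this is $\tfrac{2-\sqrt2}{4}$, for $n = 12$ it is $\tfrac{2-\sqrt3}{4}$), so periodicity fails. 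Hence no $n \equiv 0 \pmod 4$ works except $n = 4$, which has perfect state transfer (a special case already excluded from the definition of fractional revival, or handled by inspection). Combining both cases with the Example establishing $C_4$ and $C_6$, the claim follows.

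The main obstacle I anticipate is purely bookkeeping: making sure that the three eigenvalues chosen in each range genuinely lie in the correct set $\Phi_{a,b}^+$ (resp. $\Phi_a$), which depends on the parity of the index $r$ in $2\cos(2r\pi/n)$ and on $n \bmod 4$, and that the hypotheses of Theorem~\ref{thm:cosine} ($\mu_1 \pm \mu_2$ not integers, $N(\mu_i) \geq 4$, $(N(\mu_1), N(\mu_2)) \neq (5,5)$) are met — which is exactly why $n = 10$ (giving $N = 5$) has to be peeled off by hand. Everything else is routine computation of small-cycle spectra and rationality checks.
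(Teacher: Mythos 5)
Your proposal is correct and follows the paper's own proof essentially step for step: the same reduction to antipodal vertices in even cycles via Proposition~\ref{prop:aut}, the same split on $n \bmod 4$, the same eigenvalue triples $\{2, 2\cos(4\pi/n), 2\cos(8\pi/n)\}$ and $\{2, 2\cos(2\pi/n), 2\cos(4\pi/n)\}$ fed into Theorem~\ref{thm:cosine} together with Corollary~\ref{cor:ratio} and Theorem~\ref{thm:ratio_periodic}, the same separate treatment of $n=10$ via $\sqrt{5}$ and of $n\in\{8,12\}$ via $\frac{2-2\cos(2\pi/n)}{4}\notin\QQ$, with the ``if'' direction supplied by the Example. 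Two minor slips that do not affect the argument: in this paper perfect state transfer ($\alpha=0$, $\beta\neq 0$) is \emph{included} in, not excluded from, the definition of fractional revival (only periodicity, $\beta=0$, is excluded), which is exactly why $C_4$ qualifies; and when $4 \mid n$ one has $N(4/n)=n/4$ rather than $n/2$, though the hypotheses of Theorem~\ref{thm:cosine} are still met for $n\geq 16$.
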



\section{Paths} \label{section:path}

In this section, we determine the paths that admit fractional revival.

Recall that $P_n$ denotes the path on vertices $\{1,2,\dots, n\}$ where $a$ is adjacent to $a+1$, 
for $a = 1,\ldots, n-1$.   
If $n$ is odd and $a$ is the middle vertex on $P_n$ then
$|Aut_{P_n}(a)|=2 \neq |Aut_{P_n}(j)|$, for $j\neq a$.   If follows from Proposition~\ref{prop:aut} that $P_n$ has no fractional revival involving $a$.

The eigenvalues of $P_n$ are 
\begin{equation}
\theta_r=2\cos(\frac{\pi r}{n+1}), \quad r=1,...,n,
\end{equation}
with
\begin{equation}\label{eqn:path_idem}
(E_r)_{j,a} = \frac{2}{n+1} \sin(\frac{jr\pi}{n+1})\sin(\frac{ar\pi}{n+1}),
\end{equation}
for $r,j,a=1,\ldots,n$.  See Section 1.4.4 of \cite{bh}.

\begin{figure}[t]
\begin{center}
\begin{tikzpicture}[
    scale=0.5,
    stone/.style={},
    black-stone/.style={black!80},
    black-highlight/.style={outer color=black!80, inner color=black!30},
    black-number/.style={white},
    white-stone/.style={white!70!black},
    white-highlight/.style={outer color=white!70!black, inner color=white},
    white-number/.style={black}]


\tikzstyle{every node}=[draw, thick, shape=circle, circular drop shadow, fill={white}];
\path (-1,0) node (a) [scale=0.8] {};
\path (+1,0) node (b) [scale=0.8] {};
\draw[thick]
    (a) -- (b);
\end{tikzpicture}
\quad \quad \quad
\begin{tikzpicture}[
    scale=0.5,
    stone/.style={},
    black-stone/.style={black!80},
    black-highlight/.style={outer color=black!80, inner color=black!30},
    black-number/.style={white},
    white-stone/.style={white!70!black},
    white-highlight/.style={outer color=white!70!black, inner color=white},
    white-number/.style={black}]


\tikzstyle{every node}=[draw, thick, shape=circle, circular drop shadow, fill={white}];
\path (-2,0) node (a) [scale=0.8] {};
\path (+2,0) node (c) [scale=0.8] {};
\tikzstyle{every node}=[draw, thick, shape=circle, circular drop shadow, fill={blue!20}];
\path (0,0) node (b) [scale=0.8] {};
\draw[thick]
    (a) -- (b) -- (c);
\end{tikzpicture}
\quad \quad \quad
\begin{tikzpicture}[
    scale=0.5,
    stone/.style={},
    black-stone/.style={black!80},
    black-highlight/.style={outer color=black!80, inner color=black!30},
    black-number/.style={white},
    white-stone/.style={white!70!black},
    white-highlight/.style={outer color=white!70!black, inner color=white},
    white-number/.style={black}]


\tikzstyle{every node}=[draw, thick, shape=circle, circular drop shadow, fill={white}];
\path (-3,0) node (a) [scale=0.8] {};
\path (+3,0) node (d) [scale=0.8] {};
\tikzstyle{every node}=[draw, thick, shape=circle, circular drop shadow, fill={blue!20}];
\path (-1,0) node (b) [scale=0.8] {};
\path (+1,0) node (c) [scale=0.8] {};
\draw[thick]
    (a) -- (b) -- (c) -- (d);
\end{tikzpicture}
\caption{The only paths with fractional revival: $P_{2}$, $P_{3}$ and $P_{4}$.
Fractional revival occurs between antipodal vertices marked white.}
\label{fig:paths}
\end{center}
\end{figure}
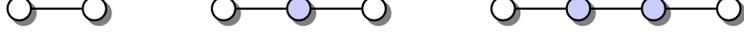

We observe that the $(n+1-a)$ is the only vertex strongly cospectral with the vertex $a$ in $P_n$,
and if $\theta_r \in \Phi_{a,n+1-a}^+$ then $r$ is odd.

It follows from Equation~(\ref{eqn:path_idem}) that
\begin{equation}
\Phi_a = \{ r \ :\ (n+1) \not | ar\}.
\end{equation}
Suppose $a$ is not the middle vertex of $P_n$ for some odd $n$, then
\begin{equation}
\theta_1, \theta_2, \theta_n \in \Phi_a.
\end{equation}
The ratio 
\begin{equation}
\frac{\theta_1-\theta_2}{\theta_1-\theta_n} 
= \frac{\theta_1-\theta_2}{2\theta_1} =\frac{1}{2} - \frac{\cos \frac{2\pi}{n+1}}{2\cos \frac{\pi}{n+1}}.
\end{equation}
When $n\geq 5$ then $\theta_2 \neq 0$ and 
the set
$\{1, \cos \frac{\pi}{n+1}, \cos \frac{2\pi}{n+1}\}$ is linearly independent over $\QQ$.   We conclude that the above ratio is irrational.
From Theorem~\ref{thm:ratio_periodic}, $a$ is not periodic in $P_n$.

Suppose fractional revival occurs between $a$ and $b$ in $P_n$, for $n \geq 5$.
By Theorem~\ref{thm:bipartite}, $a$ and $b$  belong to different bipartitions of $P_n$ and they are strongly cospectral.   This can happen only when $n$ is even and $b=n+1-a$.
Since $a$ and $b$ are not periodic vertices, Corollary~\ref{cor:gammaQ} implies that $(\cos \gamma, \ii \sin \gamma)$-revival occurs from $a$ to $b$ for some irrational number $\gamma$.  By Corollary~\ref{cor:pgst},  $P_n$ admits pretty good state transfer.
It is known that $P_n$ admits pretty good state transfer if and only if 
$n+1$ is a power of 2, $p$, or $2p$, for some odd prime $p$, see \cite{gkss}.
We conclude that $n+1=p$ for some odd prime.

For $n+1=p > 5$, $\theta_1, \theta_3, \theta_5\in \Phi_{a,n+1-a}^+$ and
the ratio
\begin{equation}\label{eqn:ratioP_n}
\frac{\theta_1-\theta_3}{\theta_3-\theta_5}
= \frac{\sin \frac{\pi}{n+1}\sin \frac{2\pi}{n+1}}{\sin \frac{\pi}{n+1}\sin\frac{4\pi}{n+1}}
=\frac{\cos (\frac{\pi}{2}-\frac{2\pi}{n+1})}{\cos (\frac{\pi}{2}-\frac{4\pi}{n+1})}
=\frac{\cos \frac{(n-3)\pi}{2(n+1)}}{\cos\frac{(n-7)\pi}{2(n+1)}}.
\end{equation}
As $n$ is even and $n+1$ is a prime, $N(\frac{n-3}{2(n+1)})=N(\frac{n-7}{2(n+1)}) = 2(n+1) > 5$ 
and $\frac{n-3}{2(n+1)} \pm \frac{n-7}{2(n+1)} \not \in \ZZ$.
By Theorem~\ref{thm:cosine},  the ratio in Equation~(\ref{eqn:ratioP_n}) is irrational.
It follows from Corollary~\ref{cor:ratio} that there is no fractional revival between $a$ and $(n+1-a)$.   Hence there is no fractional revival in $P_n$ for $n\geq 5$.

It is known that $P_2$ admits fractional revival at time $\tau$ when $\tau \not \in \pi\ZZ$, and $P_3$ admits perfect state transfer at time $\frac{\pi}{\sqrt{2}}$.

In $P_4$, for any vertex $a$
\begin{equation}
\Phi_{a,5-a}^+ = \{\frac{1+\sqrt{5}}{2}, \frac{1-\sqrt{5}}{2}\}
\quad \text{and}\quad
\Phi_{a,5-a}^- = \{\frac{-1+\sqrt{5}}{2}, \frac{-1-\sqrt{5}}{2}\}.
\end{equation}
It follows from Theorem~\ref{thm:equiv_cond} that $(-\cos \frac{\pi}{\sqrt{5}},\ii \sin \frac{\pi}{\sqrt{5}})$-revival occurs between vertices $a$ and $(5-a)$, for $a=1,2$ at time $\frac{2\pi}{\sqrt{5}}$.

\begin{theorem}
A path $P_n$ admits fractional revival if and only if $n \in \{2,3,4\}$.
\end{theorem}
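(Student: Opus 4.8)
The plan is to establish both directions by combining the spectral characterizations from the previous sections with the number-theoretic input of Theorem~\ref{thm:cosine}. For the easy direction, I would simply collect the positive examples already exhibited in the discussion preceding the statement: $P_2$ has fractional revival at any time $\tau\notin\pi\ZZ$ (here $\alpha=\cos\tau$, $\beta=-\ii\sin\tau$); $P_3$ has perfect state transfer at time $\pi/\sqrt2$, which by the convention that $\alpha=0$ falls under fractional revival with $\beta\neq 0$; and $P_4$ has $(-\cos\frac{\pi}{\sqrt5},\ii\sin\frac{\pi}{\sqrt5})$-revival between $a$ and $5-a$ at time $2\pi/\sqrt5$, as verified via Theorem~\ref{thm:equiv_cond} using the explicit eigenvalue supports $\Phi_{a,5-a}^\pm$. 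So for $n\in\{2,3,4\}$ fractional revival occurs.

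For the hard direction, suppose $n\ge 5$ and that fractional revival occurs between vertices $a$ and $b$ of $P_n$. First I would dispose of the middle vertex: if $n$ is odd and $a$ is the central vertex, then $|Aut_{P_n}(a)|=2$ while every other vertex has trivial stabilizer, so by Proposition~\ref{prop:aut} no fractional revival can involve $a$. For any other vertex $a$ (when $n\ge5$), one shows $\theta_1,\theta_2,\theta_n\in\Phi_a$ and computes the ratio $\frac{\theta_1-\theta_2}{\theta_1-\theta_n}=\tfrac12-\tfrac{\cos(2\pi/(n+1))}{2\cos(\pi/(n+1))}$, which is irrational because $\{1,\cos\frac{\pi}{n+1},\cos\frac{2\pi}{n+1}\}$ is linearly independent over $\QQ$ by Theorem~\ref{thm:cosine} (using $N\!\left(\frac{1}{n+1}\right)=N\!\left(\frac{2}{n+1}\right)\ge 6$ and the sum/difference not being integers); hence by Theorem~\ref{thm:ratio_periodic} no non-middle vertex of $P_n$ is periodic. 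Now apply Theorem~\ref{thm:bipartite}: since $a,b$ are not periodic, they cannot lie in the same bipartition, so $n$ is even, $b=n+1-a$, and $a,b$ are strongly cospectral. Since the vertices are not periodic, Corollary~\ref{cor:gammaQ} forces $\gamma$ to be an irrational multiple of $\pi$, so by Corollary~\ref{cor:pgst}, $P_n$ admits pretty good state transfer; by the known classification (Godsil--Kirkland--Severini--Smith \cite{gkss}), $n+1$ must be a power of $2$, an odd prime $p$, or $2p$. Since $n+1$ is odd here, $n+1=p$ is an odd prime.

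It remains to rule out $n+1=p>5$ (the case $p=5$ is $n=4$, already accounted for). Here one observes $\theta_1,\theta_3,\theta_5\in\Phi_{a,n+1-a}^+$ (the positive part collects the odd-index eigenvalues), and via the product-to-sum identity the ratio $\frac{\theta_1-\theta_3}{\theta_3-\theta_5}$ simplifies to $\frac{\cos\frac{(n-3)\pi}{2(n+1)}}{\cos\frac{(n-7)\pi}{2(n+1)}}$. Since $n+1$ is an odd prime greater than $5$, the denominators $2(n+1)$ of $\frac{n-3}{2(n+1)}$ and $\frac{n-7}{2(n+1)}$ are both equal to $2(n+1)>5$ (using that $\gcd(n-3,2(n+1))=\gcd(n-7,2(n+1))=2$, which needs $n+1$ prime), and their sum and difference are not integers, so Theorem~\ref{thm:cosine} makes $\{1,\cos\frac{(n-3)\pi}{2(n+1)},\cos\frac{(n-7)\pi}{2(n+1)}\}$ linearly independent over $\QQ$, whence the ratio is irrational. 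This contradicts Corollary~\ref{cor:ratio}. Therefore no $P_n$ with $n\ge5$ admits fractional revival, completing the proof.

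The main obstacle I anticipate is the careful bookkeeping in the last step: verifying that for $n+1$ an odd prime the reduced denominators $N\!\left(\frac{n-3}{2(n+1)}\right)$ and $N\!\left(\frac{n-7}{2(n+1)}\right)$ really equal $2(n+1)$ (one must check the gcd's are exactly $2$, which is where primality of $n+1$ enters) and that $(N(\mu_1),N(\mu_2))\neq(5,5)$ and $\mu_1\pm\mu_2\notin\ZZ$ so that Theorem~\ref{thm:cosine} applies; a secondary subtlety is confirming that strongly cospectral partners in $P_n$ are exactly the pairs $\{a,n+1-a\}$ and that $\Phi_{a,n+1-a}^+$ consists of the odd-index eigenvalues, both of which follow from the closed form \eqref{eqn:path_idem} for $(E_r)_{j,a}$.
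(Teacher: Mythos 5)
Your proposal reproduces the paper's own argument essentially step for step: excluding the middle vertex via Proposition~\ref{prop:aut}, ruling out periodicity through the irrationality of $(\theta_1-\theta_2)/(\theta_1-\theta_n)$, then using Theorem~\ref{thm:bipartite}, Corollaries~\ref{cor:gammaQ} and \ref{cor:pgst} with the classification of pretty good state transfer on paths to force $n+1$ to be an odd prime, and finally killing $n+1=p>5$ via the ratio $(\theta_1-\theta_3)/(\theta_3-\theta_5)$, Theorem~\ref{thm:cosine} and Corollary~\ref{cor:ratio}, with the same positive examples for $n\in\{2,3,4\}$. The only quibble is your claim $N\bigl(\tfrac{2}{n+1}\bigr)\ge 6$, which fails for small odd $n$ (at $n=5$ it equals $3$ and indeed $\cos\tfrac{2\pi}{6}=\tfrac12$, so the set is not $\QQ$-independent there — a slip the paper shares), but the ratio $\tfrac12-\tfrac{1}{2\sqrt{3}}$ is still irrational by inspection, so the argument goes through unchanged.
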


\section{Conclusion}

The theme of this paper is fractional revival in simple unweighted graphs according to the adjacency matrix model. We have developed tools that allowed for the construction of new examples (Section \ref{section:construction}), for a better understanding of the underlying theoretical framework (Sections \ref{section:properties}, \ref{section:cospectral} and \ref{section:equitable}) and for the analysis of fractional revival in large families of graphs (Sections \ref{section:bipartite} and \ref{section:path}). This work offers a self-contained complete introduction to the topic, that unveils connections to known results in related fields as well as presents new advances. Nonetheless, there are still many questions that would deserve further exploration. For instance:
\begin{enumerate}
\item In Section \ref{section:cospectral}, a deep theoretical analysis of fractional revival was made for when the vertices involved are cospectral. We have observed however that, unlike perfect or pretty good state transfer, fractional revival may occur amongst non-cospectral vertices. An open line of investigation is to study deeper properties that such vertices must satisfy. For example: is there a limit to the degrees of the field extensions that contain the eigenvalues in their support?
\item Find examples of fractional revival between non-cospectral vertices (in simple unweighted graphs).
\item If pretty good state transfer occurs between $a$ and $b$, is it possible to characterize in simple terms the cases in which fractional revival also occurs between these vertices? For example, there are infinitely many paths admitting pretty good state transfer (see \cite{gkss}), but only three of them have of fractional revival.
\item How about pretty good fractional revival? In other words, what can be said for when the probability of observing $\ket 1$ is almost entirely concentrated in two vertices?
\item Fractional revival in the Laplacian matrix model has not been explored (to the best of our knowledge). For example, are there trees other than $P_2$ in which fractional revival occurs between cospectral vertices? This question about trees can also be asked for the adjacency matrix model, but we believe the latter question will be quite harder than the former.
\item Are there examples of graphs in which three (or more) vertices are involved in fractional revival, two at time? In other words, is fractional revival monogamous like its cousin perfect state transfer is (see \cite{k11basics})?
\item Study fractional revival in association schemes. We have a work in preparation that will cover this topic.
\end{enumerate}


\section*{Acknowledgments}

We thank Chris Godsil for helpful comments. 
G.C. thanks a travel grant awarded by the Department of Computer Science at UFMG. 
The research of L.V. is supported by a discovery grant from the National Science and 
Engineering Research Council (NSERC) of Canada.
C.T. would like to thank Centre de Recherches Math\'{e}matiques, Universit\'e de Montr\'eal
for its kind hospitality.



\end{document}
